\documentclass[12pt]{article}

\usepackage[T1]{fontenc}
\usepackage[utf8]{inputenc}
\usepackage{lmodern}
\usepackage{geometry}
\usepackage{setspace}
\usepackage{amsmath, amssymb, amsthm, bm}

\usepackage{graphicx}

\usepackage{natbib}

\usepackage{enumitem}

\usepackage[titletoc,title]{appendix}
\usepackage{booktabs}
\usepackage{float}

\usepackage{hyperref}
\hypersetup{
    colorlinks=true,
    citecolor=blue,
    linkcolor=blue,
    urlcolor=blue
}

\theoremstyle{plain}
\newtheorem{theorem}{Theorem}[section]
\newtheorem{lemma}[theorem]{Lemma}
\newtheorem{proposition}[theorem]{Proposition}
\newtheorem{corollary}[theorem]{Corollary}

\newtheorem*{lemma*}{Lemma}
\newtheorem*{proposition*}{Proposition}
\newtheorem*{corollary*}{Corollary}

\theoremstyle{definition}
\newtheorem{definition}[theorem]{Definition}

\theoremstyle{remark}

\newcommand{\vs}{\mathbf{s}}
\newcommand{\vd}{\boldsymbol{\delta}}
\newcommand{\vx}{\mathbf{x}}
\newcommand{\vz}{\mathbf{z}}
\newcommand{\vG}{\mathbf{G}}
\newcommand{\vH}{\mathbf{H}}
\newcommand{\vHbar}{\overline{\vH}}

\newcommand{\vP}{\mathbf{P}}

\newcommand{\vone}{\mathbf{1}}

\newcommand{\veps}{\boldsymbol{\varepsilon}}
\newcommand{\vbeta}{\boldsymbol{\beta}}

\newcommand{\CS}{\overline{CS}}
\newcommand{\cJ}{\mathcal{J}}

\newcommand{\cG}{\mathcal{G}}

\newcommand{\Rpp}{\mathbb{R}_{++}}
\newcommand{\R}{\mathbb{R}}

\title{Demand Models for Market-Level Data with Closed-Form Inverses\thanks{This paper builds on the theoretical part of an earlier version of a paper later published as Fosgerau, Monardo, and de Palma (2024), The Inverse Product Differentiation Logit Model, \emph{American Economic Journal: Microeconomics, 16}(4), 329--370. The third author acknowledges financial support from the French National Research Agency (ANR) under the France 2030 programme through project HARMONIC (grant ANR-24-PEMO-0003).}}
\author{Julien Monardo\thanks{University of Bristol. Email: \texttt{julien.monardo@bristol.ac.uk}.}
\and
Mogens Fosgerau\thanks{University of Copenhagen. Email: \texttt{mogens.fosgerau@econ.ku.dk}.}
\and
André de Palma\thanks{CY Cergy Paris Université. Email: \texttt{andre.de-palma@cyu.fr}.}
}
\date{October 7, 2026}

\begin{document}
\maketitle

\begin{abstract}
We introduce a class of demand models for market-level data. The models can be estimated by linear instrumental variables regression while accommodating substitution patterns far richer than the logit and nested logit models they embed. They are built from closed-form inverse market share functions through a generator analogous to McFadden's generalized extreme value generating function, but acting on market shares. Constructive results allow arbitrary nesting structures, including overlapping nests and partial membership, yielding inverse-share analogs of generalized extreme value models. The class is consistent with utility maximization and strictly larger than the class of regular additive random utility models.\\[1ex]
\noindent \textit{JEL:} C25, C26, D11, D12. \\
\noindent \textit{Keywords:} demand estimation, differentiated products, discrete choice models, market-level data, nesting structures.
\end{abstract}

\newpage
\section{Introduction}

Estimating demand for differentiated products with market-level data generally entails a trade-off between tractability and flexibility. Models with linear-in-parameters closed-form inverses, such as the logit and nested logit, are easy to estimate by linear instrumental variables regression but impose strong restrictions on substitution patterns. More flexible models, such as the random coefficient logit model, accommodate richer substitution patterns but lack closed-form inverses and therefore require repeated numerical inversion for estimation.

In this paper, we introduce a class of demand models that relaxes this trade-off. We model demand through closed-form inverse market share functions rather than deriving it from individual utility maximization. The models can be estimated by linear instrumental variables regression, accommodate rich substitution patterns, and remain consistent with utility maximization. These features make them attractive for empirical work, particularly for counterfactual and welfare analysis in modern applications featuring many products or markets.

We specify inverse market share functions via a generator satisfying mild regularity conditions. This generator is analogous to the generalized extreme value (GEV) generating function of \citet{mcfadden1978modeling}, but acts on market shares rather than utilities. The models retain the key ingredients of the demand estimation framework of \cite{berry1994estimating} and \cite{BLP1995}: products are differentiated by observed and unobserved characteristics, and instrumental variables address endogeneity arising from the latter. The inverse market share function and its derivatives suffice for computing quantities of interest, such as price elasticities and diversion ratios, and for counterfactual and welfare analyses.

Our class embeds the logit and nested logit models. We provide constructive results accommodating arbitrary nesting structures, including overlapping nests and partial membership. These results yield inverse share analogs of well-known GEV models, including the product differentiation logit \citep{bresnahan1997market}, the paired combinatorial logit \citep{chu1989paired,koppelman2000paired}, and the cross-nested logit \citep{bierlaire2006theoretical}. The resulting models preserve the computational simplicity of the logit and nested logit models while allowing richer substitution patterns through more flexible nesting structures.

We then provide microfoundations for our class of models and relate it to existing discrete choice models. First, we show that every model admits a representative consumer interpretation, as in the logit and nested logit models \citep{anderson1988representative,verboven1996international}. Second, our representative consumer problem has the form that \cite{allen2019} derive by aggregating heterogeneous utility-maximizing consumers. Lastly, we show that every additive random utility model satisfying mild regularity assumptions induces an inverse market share function in our class for a suitable generator, whereas the converse does not hold.

This paper relates to three strands of literature. First, it builds on demand estimation for differentiated products using market-level data, following \cite{berry1994estimating} and \cite{BLP1995}. In contrast to random coefficients logit models and nesting-based GEV models, our models can be estimated by linear instrumental variable regression.

Second, this paper relates to empirical work that seeks to balance tractability and flexibility. One approach reduces dimensionality by summarizing substitution patterns with lower-dimensional objects \citep{pinkse2004mergers,fally2026}. Another approach reduces dimensionality by exploiting permutation invariance \citep{singh2023,Monardo2026,Nevo2026}. Other papers simplify random coefficients logit estimation \citep{lu2023,salanie2022}. Most closely, \citet{fosgerau2024} propose the IPDL model, which reduces dimensionality through a parsimonious nesting structure defined on the inverse market share function. We characterize a new class of inverse market share functions that extends the IPDL model to arbitrary overlapping nests and partial membership.

Third, this paper relates to theoretical work connecting continuous- and discrete choice models. \citet{anderson1988representative} and \citet{verboven1996} derive the representative consumer direct utilities for the logit and nested logit demands, respectively. \citet{dube2025discrete} derive analogous utilities for discrete/continuous choice models. \citet{monardo2025functional} provides restrictions characterizing a large family of utility-consistent discrete choice demand models extending beyond additive random utility models (ARUMs). \citet{hofbauer2002global} show that ARUM demands admit a representative consumer representation. \citet{allen2019} link representative consumer models to heterogeneous, utility-maximizing consumers. We derive a representative consumer model for every inverse market share function in our class and for every unit-demand induced by an ARUM satisfying regularity assumptions.

The remainder of the paper proceeds as follows. Section \ref{Section:Class} introduces the framework, develops the class of inverse market share functions, and presents the nesting-based generators. Section \ref{Sec:IdentandEstim} discusses identification and estimation with market-level data. Section \ref{Section:Utility} provides microfoundations and relates our class to existing discrete choice models. Section \ref{sec:ccl} concludes.

\section{The Class of Demand Models} \label{Section:Class}

This section introduces our class of inverse market share functions.  Proofs are given in Appendix
\ref{Appendix:ProofSec2}.

\subsection{Framework}

Our framework is standard in the empirical literature \citep{BH2016,BH2021}. In each market $t$, consumers choose among differentiated products in the set
$\cJ=\{0,1,\ldots,J\}$, where $j=0$ is the outside option. Each product $j$ is defined by its characteristics $\vx_{jt}$, price $p_{jt}$, and an
unobserved characteristic term $\xi_{jt}$. The $\xi_{jt}$ terms constitute the error terms of the model, summarizing product/market characteristics unobserved by the modeller but potentially observed by consumers and firms. Each product $j$ is associated with a linear index
\begin{equation}\label{eq:Delta}
\delta_{jt}
= \vx_{jt}\vbeta-\alpha p_{jt}+\xi_{jt},
\qquad
\delta_{0t}=0,
\end{equation}
where $\vbeta\in\R^K$ and $\alpha >0$. Let $\vd_t=(\delta_{0t}, \ldots,\delta_{Jt})^\top$ and define $\mathbb R^{J+1}_0
=
\{\vd\in\mathbb R^{J+1}:\delta_0=0\}$.

Let $\Delta_J$ and $\Delta_J^\circ$ denote the sets of market share vectors and positive market share vectors, respectively. By aggregating individual choices in market $t$, we obtain the market share vector $\vs_t=(s_{0t},\ldots,s_{Jt})^\top \in \Delta_J^\circ$. Furthermore, assume that consumers' choice in market $t$ is characterized by the market share function $\boldsymbol{\sigma}:\R^{J+1}\rightarrow\Delta_J^\circ$ with $\boldsymbol{\sigma}(\vd_t)=(\sigma_0(\vd_t),\ldots,\sigma_J(\vd_t))$.

Following \cite{berry1994estimating}, consider the system of market share equations relating the observed market shares to the linear indexes through the market share function:
\begin{equation}
    s_{jt} = \sigma_j (\vd_t),\quad j \in \cJ.
\end{equation}

When it exists, the inverse market share function
$\boldsymbol{\sigma}^{-1}:\Delta_J^\circ\rightarrow \R_0^{J+1}$, with $\boldsymbol{\sigma}^{-1}(\vs_t)=(\sigma_0^{-1}(\vs_t),\ldots,\sigma_J^{-1}(\vs_t))$ satisfies $\delta_{jt}=\sigma_j^{-1}(\vs_t)$. Substituting \eqref{eq:Delta} yields
\begin{equation}\label{eq:xii}
    \xi_{jt}
    =\sigma_j^{-1}(\vs_t)
    -\vx_{jt}\vbeta
    +\alpha p_{jt},\quad j \in\cJ.
\end{equation}

Prices and shares are allowed to be endogenous, i.e., correlated with the error terms $\xi_{jt}$. Following \cite{berry1994estimating}, estimation is based on the conditional moment restrictions
\begin{equation}
    \mathbb{E}[\xi_{jt}|\vx_t,\vz_t] = 0, \quad j = 1,\ldots,J,
\end{equation}
where $\vz_t$ are instruments for prices and market shares.

\subsection{The Class of Inverse Market Share Functions \label{Sec:ClassModels}}

\begingroup
\emergencystretch=2em
To ease exposition, we suppress the market subscript $t$. For a vector-valued function $\vG:\R^{J+1}\to\R^{J+1}$ with $\vG(\vs)=(G_0(\vs),\ldots,G_J(\vs))$ and $G_j:\R^{J+1}\to\R$, let $\mathbf{J}^{\vs}_{\vG}(\tilde{\vs})=\left[\partial G_i(\tilde{\vs})/\partial s_j\right]_{i,j=0}^{J}$ denote the Jacobian of $\vG$ with respect to $\vs$, evaluated at $\tilde{\vs}$.
\par\endgroup

Our class of inverse market share functions is built from a generator.

\begin{definition}\label{Def:Generator}
A generator is a continuous function $\vG:\R_+^{J+1}\rightarrow \R_+^{J+1}$, with $\vG(\vs)=(G_0(\vs),\ldots,G_J(\vs))$, twice continuously differentiable on the interior $\Rpp^{J+1}$,
such that:
(i) $\vG$ is homogeneous of degree $1$;
(ii) For every $\vs\in\Delta_J$ and every $j\in\mathcal J$, $G_j(\vs)=0 \Leftrightarrow s_j=0$;
(iii) the Jacobian $\mathbf{J}_{\ln\vG}^{\vs}(\vs)$ is symmetric and
positive definite on $\Delta_J^\circ$.
\end{definition}

By Lemma \ref{prop:inversion} in Appendix \ref{Appendix:ProofSec2}, generators are invertible on $\mathbb{R}_{++}^{J+1}$.
Given a generator, we define
\begin{equation}\label{eq:inversemarket}
\sigma_j^{-1}(\vs)
=
\ln G_j(\vs) + c,\quad j \in\cJ
\end{equation}
where $c$ is a market-specific constant determined by the normalization
$\delta_0=0$.

Let $\vH = \vG^{-1}:\Rpp^{J+1}\rightarrow \Rpp^{J+1}$, with $\mathbf H (e^{\vd}) =(H_0(e^{\vd}),\ldots,H_J(e^{\vd})) =  \mathbf G ^{-1}(e^{\vd})$, denote the inverse of the generator $\vG$. Equation \eqref{eq:inversemarket} can be inverted on $\Delta_J^\circ$ and implies the market share function
\begin{equation}\label{eq:demandsystem}
\sigma_j(\vd)
=
\frac{H_j(e^{\vd})}
{\sum_{k\in\cJ}H_k(e^{\vd})}.
\end{equation}
It is invariant to adding a constant to all product indexes: $\boldsymbol{\sigma}(\vd+\mathbf{k}) = \boldsymbol{\sigma}(\vd)$ for any constant vector $\mathbf{k}\in\R^{J+1}$.

Let $\vs=\boldsymbol{\sigma}(\vd)$. The substitution matrix is the Jacobian matrix
\[
\mathbf{J}_{\boldsymbol{\sigma}}^{\vd}{(\vd)
=
\bigl[\mathbf{J}_{\ln\vG}^{\vs}(\vs)\bigr]^{-1}
-
\vs\vs^\top,}
\]
which is symmetric and positive semidefinite.

Up to an additive constant, the associated consumer surplus in index units is \begin{equation}\label{eq:CSfunc}
CS(\vd)
=
\ln\!\left( \sum_{k\in \cJ} H_k(e^{\vd}) \right),
\end{equation}
which is convex and increasing, and  satisfies $CS(\vd+\mathbf{k})=CS(\vd)+k$, for any constant vector $\mathbf{k}=k\vone_{J+1}\in\R^{J+1}$, where $\vone_{J+1} \in \R^{J+1}$ denotes the vector of ones.

Several remarks follow. First, the inverse market share function defines
a market share function $\boldsymbol{\sigma}$ via
\eqref{eq:demandsystem}, although it is generally not available in closed
form. Moreover, homogeneity of $\vG$ implies that $\boldsymbol{\sigma}(\vd+\mathbf{k}) =
\boldsymbol{\sigma}(\vd)$ and motivates the normalization $\delta_0=0$.
The substitution matrix, along with elasticities and diversion ratios,
is easy to compute from $\vG$. It is symmetric and positive semidefinite, implying both the weak law
of demand, $\partial \sigma_j(\vd)/\partial\delta_j\geq 0$, and, by the
Williams--Daly--Zachary theorem, the existence of a convex surplus
function $CS$ satisfying
$\partial CS(\vd)/\partial\delta_j=\sigma_j(\vd)$.\footnote{This follows from the fact that any continuously
differentiable vector-valued function with symmetric Jacobian is the
gradient of a real-valued potential function, which is convex whenever
the Jacobian is positive semidefinite
\citep[see, e.g.,][Section 3.A.9, p. 108]{sandholm2010population}. The
Williams-Daly-Zachary theorem is the discrete-choice analogue of Roy's
identity.} The model is therefore well suited to welfare analysis. Finally, substitution patterns depend on product
characteristics only through $\vd$, as in GEV models, rather than
directly on continuous characteristics as in random coefficients logit,
though nesting can still proxy for such characteristics
\citep{grigolon2014nested}.

\subsection{Generator Construction}

We now give two nesting-based constructions of generators in which products are assigned to one or more nests and products within the same nest tend to be closer substitutes.

\paragraph{Nesting.} We begin with a binary-membership construction that
allows arbitrary nesting and overlap, including hierarchical structures such as nested logit and non-hierarchical structures with richer substitution patterns.

Let $\cG\subseteq2^{\cJ}$ be a collection of nests with
parameters $(\mu_g)_{g\in\cG}$ satisfying $\mu_g\geq0$ for every
$g\in\cG$ and
$\sum_{g\in\cG:j\in g}\mu_g<1$ for every $j\in\cJ$.

Then $\vG(\vs)$ defined by
\begin{equation}\label{eq:generatorGN}
\ln G_j(\vs)
=
\Bigl(1-\!\!\sum_{g\in\cG:j\in g}\mu_g\Bigr)\ln s_j
+
\sum_{g\in\cG:j\in g}\mu_g
\ln\Bigl(\sum_{i\in g}s_i\Bigr)
\end{equation}
is a generator. This construction yields generators for logit, nested logit models with
any number of levels, and the IPDL model.

\paragraph{Partial membership.}
We now allow products to belong to nests with continuous weights. Let $(\omega_{jg})_{j\in\cJ,\,g\in\cG}$ be nonnegative weights measuring the degree to which product $j$ belongs to nest $g$, satisfying  $\sum_{g\in\cG} \mu_{g} \omega_{jg}<1$ for each $j\in\cJ$. Without loss of generality, we restrict $\cG$ to groups satisfying $\mu_g>0$ and $\sum_{i\in\cJ}\omega_{ig}>0$, since all other groups make no contribution to the generator. Then $\vG(\vs)$ defined by
\begin{equation}\label{eq:ConstructionMember}
\ln G_j(\vs)
=
\left(1-\sum_{g \in\cG} \mu_g \omega_{jg}\right)\ln s_j
+
\sum_{g \in\cG} \mu_g \omega_{jg}\,
\ln\!\left(\sum_{i\in\cJ}\omega_{ig}s_i\right)
\end{equation}
is a generator.

The nesting construction above is recovered when $\omega_{jg}=\mathbf{1}\{j\in g\}$. The membership weights $\omega_{jg}$ can be treated as parameters to be estimated from the data, although they are typically fixed ex ante as a function of the nesting or overlap structure \citep{small1987discrete,bresnahan1997market,grigolon2021blurred}. Estimating them requires nonlinear regression.

\section{Identification and Estimation} \label{Sec:IdentandEstim}

For each product $j=1,\dots,J$ and market $t=1,\dots,T$, we observe market shares $s_{jt}$, prices $p_{jt}$, and product characteristics $\vx_{jt}$.

Assume that the outside good belongs to exactly one nest, of which it is the only member, with $\omega_{0g_0}=1$, $\omega_{jg_0}=0$ for $j\neq0$, and $\omega_{0g}=0$ for $g\neq g_0$. Then the generator satisfies
$\ln G_0(\vs_t) = \ln s_{0t} = \delta_{0t} + c_t$, and with the normalization
$\delta_{0t}=0$, $c_t=-\ln s_{0t}$. Substituting into \eqref{eq:inversemarket} and
using a linear index \eqref{eq:Delta} yields the estimating equations
\begin{equation}\label{eq:lineariv2}
\ln\!\left(\frac{s_{jt}}{s_{0t}}\right)
=
\vx_{jt}\vbeta
- \alpha p_{jt}
+ \sum_{g\in\cG}
\mu_g \omega_{jg}
\ln\!\left(
\frac{s_{jt}}
{\sum_{i\in\cJ} \omega_{ig} s_{it}}
\right)
+ \xi_{jt},
\qquad j=1,\dots,J.
\end{equation}

Under full membership ($\omega_{jg}=1_{\{j\in g\}}$),
\eqref{eq:lineariv2} takes the same form as the estimating equations for the
logit, nested logit, and IPDL models, differing only in the specific
log-share terms implied by the chosen nesting structure.

Assume that the membership weights $\omega_{jg}$ are fixed by the nesting or overlap structure, characteristics
$\vx_{jt}$ are exogenous, and prices and log-share terms are endogenous. Equation \eqref{eq:lineariv2} is then a linear instrumental variable regression with $|\cG|+1$ endogenous variables: price and one log-share per group. Identification requires at least $|\cG|+1$ excluded instruments, uncorrelated with the errors $\xi_{jt}$ and generating sufficient independent variation in the endogenous regressors.

The literature has identified a range of possible instruments \cite[see, e.g.,][Section 4.2]{BH2021}.
Cost and markup shifters, including Hausman and BLP‐type instruments, that provide exogenous variation in prices can instrument prices. Because the log-share terms capture within-group substitution patterns, they also require instruments that generate exogenous variation in the choice set; standard cost and markup shifters are natural candidates.

Estimation reduces to a standard linear IV problem: given valid instruments, the parameters $(\vbeta,\alpha,\{\mu_g:g\in\mathcal{G}\})$ can be estimated by two-stage least squares or by a two-step GMM estimator.\footnote{Neither estimator enforces the nesting parameter restrictions, so unconstrained estimates need not be consistent with utility maximization. For the full-membership construction, \citet{hortaccsu2022estimating} show that imposing $\mu_g\geq0$ can help recover the nesting structure from data.} Closed-form inverse shares avoid numerical fixed points or contraction mappings, making estimation scalable to settings with many products and markets.  When $|\mathcal G|$ is small, the computational cost is comparable to that of logit or nested logit estimation, although it grows with $|\mathcal G|$.

\section[Theoretical Foundations and Relation to Discrete Choice Models]{Theoretical Foundations and Relation to\\Discrete Choice Models}\label{Section:Utility}

This section provides a representative consumer representation for every model in our class and relates the class to ARUMs. Proofs are relegated to Appendix \ref{Appendix:ProofSec3}.

\subsection{Representative Consumer}

Consider a representative consumer facing the choice set $\cJ$ of differentiated products and a homogeneous num\'eraire good. The price of the num\'eraire good is normalized to $1$ and the consumer's income satisfies $y>\max_{j\in\cJ}p_{j}$, ensuring positive num\'eraire consumption. Each product $j \in \cJ$ has a quality $v_j \in \R$ and a price $p_j \in \R$.

Following \citet{anderson1988representative}, the logit demand arises from a representative consumer maximizing a quasi-linear direct utility that trades off quality and variety, with taste for variety captured by the entropy function $\Omega(\vs)=-\sum_{j\in\cJ}s_j\ln s_j$ for $\vs\in\Delta_J$. The nested logit and IPDL models correspond to alternative generalizations of this function \citep{verboven1996international,fosgerau2024} .

Our generator framework unifies and extends these constructions. For any
generator $\vG$, define the interior variety term
\begin{equation*}
\Omega^\circ(\vs)
=
-\sum_{j\in\cJ}s_j\ln G_j(\vs),
\qquad \vs\in\Delta_J^\circ.
\end{equation*}

Assume that $\Omega^\circ$ admits a finite continuous extension
$\Omega:\Delta_J\to\R$, and set $\Omega(\vs)=-\infty$ outside $\Delta_J$. Under the generator conditions and the boundary regularity of $\Omega^\circ$, $\Omega$ is concave and continuous on $\Delta_J$ and strictly concave on $\Delta_J^\circ$. For every $\vd\in\mathbb R^{J+1}$, the function $\vd^\top\vs+\Omega(\vs)$ has a unique maximizer in $\Delta_J^\circ$. The representative consumer chooses market shares $\vs\in\Delta_J$ and
num\'eraire good consumption $z\geq 0$ to solve
\begin{equation}\label{eq:maxU}
\max_{\vs,\,z}\;\alpha z + \sum_{j\in\cJ} v_j s_j + \Omega(\vs)
\quad \text{s.t.}\quad \sum_{j\in\cJ} p_j s_j + z \le y.
\end{equation}

The first-order conditions simplify, via the Euler-type identity in Lemma~\ref{Lemma:Euler}, to
\[
\delta_j = \ln G_j(\vs) + c,\quad j \in \cJ.
\]

This yields the following proposition.
\begin{proposition}\label{Prop:RC}
Let $\vG$ be a generator, and suppose that its associated
$\Omega^\circ$ admits a finite continuous extension to
$\Delta_{\cJ}$. Then a representative consumer solving
\eqref{eq:maxU} has the inverse market share function given by
\eqref{eq:inversemarket}.
\end{proposition}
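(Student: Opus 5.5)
The plan is to reduce \eqref{eq:maxU} to an unconstrained concave program over the simplex and then read off its first-order conditions.

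\emph{Reduction.} Because $\alpha>0$, the budget constraint binds, so $z=y-\sum_j p_j s_j$. Since $y>\max_j p_j$ and $\sum_j s_j=1$ on $\Delta_J$, this $z$ is positive. Substituting it in, the objective becomes $\alpha y+\sum_j(v_j-\alpha p_j)s_j+\Omega(\vs)$, that is, $\alpha y+\vd^\top\vs+\Omega(\vs)$ with $\delta_j=v_j-\alpha p_j$. The constant $\alpha y$ does not matter, and neither does normalizing $\delta_0=0$: on the simplex, shifting every index by a constant changes the objective only by that constant. So the problem reduces to maximizing $\vd^\top\vs+\Omega(\vs)$ over $\Delta_J$.

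\emph{Existence, uniqueness and interiority.} By the properties stated just before the proposition, this maximizer exists, is unique, and lies in $\Delta_J^\circ$. These rest on concavity of $\Omega$, strict concavity on $\Delta_J^\circ$, and the boundary behaviour of $\Omega^\circ$. Interiority allows Lagrangian first-order conditions with a single multiplier $\lambda$ for $\sum_j s_j=1$, where $\Omega=\Omega^\circ$ is differentiable. This step is the main obstacle if one insists on rederiving interiority rather than citing it. The idea would be to show that the directional derivative of $\Omega^\circ$ toward the interior tends to $+\infty$ as $s_j\to0$. For the logit this is the familiar behaviour of $-\ln s_j$. In general it should follow from property (ii) of Definition~\ref{Def:Generator}, $G_j\to0$ iff $s_j\to0$, which makes $-\ln G_j\to+\infty$.

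\emph{First-order conditions.} We need the gradient of $\Omega^\circ$:
\[
\frac{\partial\Omega^\circ}{\partial s_i}=-\ln G_i(\vs)-\sum_j s_j\frac{\partial \ln G_j(\vs)}{\partial s_i}.
\]
By symmetry of $\mathbf{J}^{\vs}_{\ln\vG}$ (property (iii)), the last sum equals $\sum_j s_j\,\partial\ln G_i/\partial s_j$. Homogeneity of degree one of $G_i$ gives homogeneity of degree zero of $\ln G_i$ up to an additive $\ln$ of the scale, so Euler's identity yields $\sum_j s_j\,\partial \ln G_i/\partial s_j=1$. This is the content of Lemma~\ref{Lemma:Euler}, which I would invoke directly. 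Hence $\partial\Omega^\circ/\partial s_i=-\ln G_i(\vs)-1$. The first-order conditions are then $\delta_i-\ln G_i(\vs)-1=\lambda$, i.e., $\delta_i=\ln G_i(\vs)+c$ with $c=1+\lambda$, which is common across products.

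\emph{Conclusion.} The constant $c$ is pinned down by $\delta_0=0$, exactly as in \eqref{eq:inversemarket}. Because the maximizer is unique, the mapping from $\vd$ to the chosen shares is the market share function. Its inverse is therefore the function defined by \eqref{eq:inversemarket}, and since Lemma~\ref{prop:inversion} guarantees that generators are invertible, this is consistent with \eqref{eq:demandsystem}.
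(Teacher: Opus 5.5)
Your proposal is correct and follows essentially the same route as the paper: you substitute the binding budget constraint, invoke the unique interior maximizer from Lemma~\ref{lemma:Omega}, and simplify the Lagrangian first-order conditions with the Euler identity of Lemma~\ref{Lemma:Euler}, which yields $\delta_j=\ln G_j(\vs)+c$ with $c=1+\lambda$. The one minor difference is your optional interiority sketch, which argues that directional derivatives blow up at the boundary; the paper instead gets an interior solution of the first-order conditions from Lemma~\ref{prop:inversion} and then uses concavity to show it is the global maximizer.
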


Problem \eqref{eq:maxU} represents aggregate demand through a representative consumer endowed with a perturbed utility \citep{mcfadden2012theory}. \citet{allen2019} derive a representative consumer problem of this form by aggregating heterogeneous utility-maximizing consumers under their assumptions; the converse is not established.

\subsection{Relation to ARUMs}

Consider a population of consumers who face the choice set $\cJ$, and associate a deterministic utility component $\delta_{j}\in\R$ to each product $j\in\cJ$. In the ARUM, each individual consumer chooses the product that maximizes her indirect utility
\begin{equation}
    u_{j}=\delta_{j}+\varepsilon_{j},\quad j \in \cJ.\label{eq:utilityARUM}
\end{equation}

Suppose that the vector of random utility components $\boldsymbol{\varepsilon}$ has finite means and a strictly positive density on $\mathbb R^{J+1}$ and is independent of $\vd$. These assumptions rule out utility ties and random coefficients and ensure strictly positive choice probabilities. Linearity in income removes income effects, while \(\epsilon_j\) is the sole source of heterogeneity.

Let
\[
\CS(\vd)
=
\mathbb{E}\Big[ \max_{j\in\cJ} u_j \Big]
\]
denote the expected maximum utility (surplus). By the Williams–Daly–Zachary
theorem, the choice probabilities are given by \citep*{mcfadden1981econometric,sorensen2022mcfadden}
\[
P_j(\vd)
=
\frac{\partial\CS(\vd)}{\partial\delta_j},
\qquad j\in\cJ.
\]

Define $\vHbar(e^{\vd}):\Rpp^{J+1}\rightarrow \Rpp^{J+1}$ componentwise by
\[
\overline{H}_j(e^{\vd})
=
\frac{\partial}{\partial\delta_j}
e^{\CS(\vd)}
=
P_j(\vd)\,
e^{\CS(\vd)},
\qquad j\in\cJ.
\]

Then
\[
P_j(\vd)
=
\frac{\overline{H}_j(e^{\vd})}
{\sum_{k\in\cJ}\overline{H}_k(e^{\vd})},
\qquad j\in\cJ,
\]
so ARUM choice probabilities coincide with our market shares
\eqref{eq:demandsystem} with $\vH=\overline{\vH}$.

Because $\vP$ is invariant to adding a constant to all product indexes, we impose the normalization $\sum_{j\in\cJ}\delta_j=0$ and work on $\Lambda=\left\{\vd\in\R^{J+1}:\sum_{j\in\cJ}\delta_j=0\right\}$. Under the preceding assumptions, the restriction $\vP|_{\Lambda}:\Lambda\to\Delta_J^\circ$ is a bijection. To obtain the required smoothness, assume now that $\vP$ is twice continuously differentiable. Lemma~\ref{lem:ARUM_invert} shows that $\overline{\vH}:\Rpp^{J+1}\to\Rpp^{J+1}$ is a bijection and that its inverse $\overline{\vG}=\overline{\vH}^{-1}$ is a generator. This implies that the inverse of any ARUM choice probability is given by
\begin{equation}\label{eq:ARUM_demand1}
    \delta_j=\ln \overline{G}_j\left(\vs\right)+c,\quad j\in\cJ,
\end{equation}
which coincides with of our inverse market shares \eqref{eq:inversemarket} with $\vG=\overline{\vG}$, and leads to the following proposition.

\begin{proposition}\label{prop:GL_ARUM}
Consider any ARUM \eqref{eq:utilityARUM}, where
$\boldsymbol{\varepsilon}$ satisfies the mild regularity conditions listed above. Suppose that its
choice probability function $\vP$ is twice continuously differentiable
on $\Lambda$. Then there exists a generator $\vG$ such that the ARUM
choice probabilities coincide with the market share function \eqref{eq:demandsystem} induced by \eqref{eq:inversemarket}. Thus, every
such ARUM has an inverse choice probability function belonging to our
class. Conversely, some inverse market share functions in our class are
not consistent with any ARUM.
\end{proposition}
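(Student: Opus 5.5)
The proof has two parts: the forward inclusion, which I reduce to Lemma~\ref{lem:ARUM_invert}, and the converse, which I settle with an explicit counterexample.

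\textbf{Forward inclusion.} By the Williams--Daly--Zachary theorem, $P_j=\partial\CS/\partial\delta_j$. Define $\overline{H}_j(e^{\vd})=P_j(\vd)e^{\CS(\vd)}$. Then $\overline{H}_j/\sum_k\overline{H}_k=P_j$, since the factor $e^{\CS}$ cancels. Lemma~\ref{lem:ARUM_invert} gives that $\overline{\vH}$ is a bijection of $\Rpp^{J+1}$ and that $\overline{\vG}=\overline{\vH}^{-1}$ is a generator. Inverting, if $\vs=\vP(\vd)$, then $e^{\vd}=\overline{\vG}(\vs e^{\CS(\vd)})=e^{\CS(\vd)}\overline{\vG}(\vs)$ by homogeneity of degree one. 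Taking logs gives $\delta_j=\ln\overline{G}_j(\vs)+c$ with $c=\CS(\vd)$, which is \eqref{eq:inversemarket}. The constant $c$ is then pinned down by whichever normalization is used ($\sum_j\delta_j=0$ on $\Lambda$, or $\delta_0=0$); changing normalization only shifts $c$, because $\vP$ is invariant to common shifts. This part is routine once the lemma is granted.

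\textbf{Converse.} I exhibit a generator from the nesting construction \eqref{eq:generatorGN} whose demand violates a necessary ARUM property. The cleanest such property is the sign pattern of higher cross-derivatives. For an ARUM, $P_0(\vd)=\Pr(\varepsilon_j-\varepsilon_0<\delta_0-\delta_j\ \forall j\neq0)$ is a joint CDF in the arguments $\delta_0-\delta_j$. Hence every mixed partial $\partial^{k}P_0/\partial\delta_{j_1}\cdots\partial\delta_{j_k}$ over distinct $j_i\neq0$ has sign $(-1)^k$. In particular, the third mixed derivative must be $\le0$ (the McFadden/Daly--Zachary conditions).

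I take $J=3$ and use overlapping pair nests in the spirit of the paired-combinatorial structure, for instance nests $\{1,2\},\{1,3\},\{2,3\}$ with a common $\mu$. These satisfy the constraint in \eqref{eq:generatorGN} when $2\mu<1$, so they define a generator. The IPDL literature \citep{fosgerau2024} already notes that such models can produce complementarity or sign violations. I would instead try first the simplest case $J=2$ with one nest $\{1,2\}$ plus a nest $\{0,1\}$. If that fails, I would move to the overlapping-pair example above.

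\textbf{Main obstacle: computing the derivative.} The demand is not in closed form, so the mixed derivatives of $\sigma_0$ must come from implicit differentiation. The Jacobian $\mathbf{J}_{\boldsymbol{\sigma}}^{\vd}=[\mathbf{J}_{\ln\vG}^{\vs}]^{-1}-\vs\vs^\top$ gives the first derivatives. Differentiating it once more gives second and third derivatives in terms of the derivatives of $\mathbf{J}_{\ln\vG}^{\vs}$, which are explicit rational functions of $\vs$ for \eqref{eq:generatorGN}.

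I would evaluate at a convenient symmetric point, e.g.\ equal shares. At that point the matrices have closed-form inverses, and I would show that the required sign fails for large $\mu$ near its upper bound. If third-order computations are messy, I would fall back on a second-order condition instead. The condition is $\partial^2P_0/\partial\delta_1\partial\delta_2\ge0$, equivalently that $P_0$ is supermodular in $(-\delta_1,-\delta_2)$. I would look for a nonhierarchical overlapping structure in which this fails at some interior point. A single such point suffices, by continuity, to rule out any ARUM.
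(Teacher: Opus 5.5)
\textbf{Gap: the converse is a search plan, not a proof.} Your forward inclusion is the paper's argument: reduce to Lemma~\ref{lem:ARUM_invert}, then use homogeneity of $\overline{\vG}$ to get $\delta_j=\ln\overline G_j(\vs)+\CS(\vd)$. That part is fine. The converse, however, consists entirely of exhibiting one model in the class that violates an ARUM sign restriction, and you only describe a search (``try $J=2$ first; if that fails, \dots; if third order is messy, fall back \dots'') without ever establishing a violation.

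The fallback you offer is also empty for the examples you name. When the outside good belongs to no nest, $\mathbf J_{\ln\vG}^{\vs}$ is block diagonal with $(0,0)$ entry $1/s_0$. Hence $\partial\sigma_0/\partial\delta_j=-\sigma_0\sigma_j$ identically, and
\[
\frac{\partial^2\sigma_0}{\partial\delta_1\partial\delta_2}=s_0\Bigl(2s_1s_2-\bigl[(\mathbf J_{\ln\vG}^{\vs})^{-1}\bigr]_{12}\Bigr).
\]
This is positive whenever products $1$ and $2$ are substitutes. In your triangle example, all cross effects are negative at the symmetric points you plan to use, so the second-order test cannot succeed there. For $J=2$ there is no third-order condition over distinct inside alternatives at all.

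\textbf{How to close it.} You mention complementarity yourself but do not use it; that is how the paper closes the converse, at first order. ARUMs force $\partial P_j/\partial\delta_k\le 0$ for $j\neq k$, while the substitution matrix $[\mathbf J_{\ln\vG}^{\vs}]^{-1}-\vs\vs^\top$ is explicit, so one positive off-diagonal entry suffices. A concrete instance: put the outside good alone, and use nests $\{1,2\},\{3,4\},\{1,3\},\{2,4\}$ with common $\mu<1/2$. At equal inside shares $s$, one gets $\partial\sigma_1/\partial\delta_4=s\bigl[\mu^2/(2(1-\mu)(1-2\mu))-s\bigr]$, which is positive for $\mu$ near $1/2$.

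Your third-order route also works, and your triangle is a valid witness. With $s_1=s_2=s_3=s$ and $c=1-3\mu/2$,
\[
\frac{\partial^3\sigma_0}{\partial\delta_1\partial\delta_2\partial\delta_3}=s_0\,s\Bigl[\frac{3\mu^3}{8c^3}-\frac{3\mu s}{c}-6s^2\Bigr].
\]
At $\mu=0$ this reduces to the logit value $-6s_0s^3$. It is strictly positive for small $s$ for any $\mu\in(0,1/2)$, and for each fixed $s<1/3$ once $\mu$ is close enough to $1/2$, contradicting the required sign $\le 0$.

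Once this computation is supplied, your route gives a slightly stronger conclusion than the paper's. The class exceeds ARUMs even at a point where all cross effects are negative, so the gap is not only about complementarity. The cost is a third-order implicit differentiation, whereas the paper's route is a one-step first-order check tied to complementarity.
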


This proposition establishes that our class strictly contains the ARUMs. In particular, it allows complementarity in demand \citep{allen2018complementarity,fosgerau2023nontransitive,fosgerau2024}, which ARUMs rule out.\footnote{Other micro-founded demand systems allowing
complementarity in demand include AIDS \citep{deaton1980}, EASI
\citep{lewbel2009tricks}, and linear demand
\citep[][]{pinkse2004mergers,thomassen2017,lewbel2019sparse}.
These systems restrict each product's unobservable to its own demand
equation. Our class avoids this restriction and retains the framework
of \citet{berry1994estimating} and \citet{BLP1995}.} It also strengthens \citet{hofbauer2002global} by showing that every ARUM admits not only a representative consumer representation but the more specific representation induced by $\vG$. This also addresses \citet[][p. 91]{anderson1992discrete}'s call for representative consumer utility functions for probit, nested logit, and GEV models.

\subsection{Relation to the GEV Family}

The family of GEV models includes all ARUMs in which the random utility vector follows a multivariate extreme value distribution \citep{mcfadden1978modeling, anderson1992discrete, fosgerau2013choice}. To accommodate flexible substitution patterns, several GEV models have been developed with nesting structures of varying complexity \citep[see, e.g.,][]{small1987discrete, chu1989paired,bresnahan1997market,vovsha1997application,koppelman2000paired,wen2001generalized,Ben-Akiva1999a,davis2014flexible,grigolon2021blurred}.

Our class overlaps with GEV models through the logit and nested logit models. Each GEV generator in Table~\ref{tab:GEV} has a corresponding $\vG$ from our general nesting or partial membership construction. However, GEV models embed nesting in market shares through sums of exponentiated indexes, whereas we embed it in the inverse market share function through logarithms of sums of market shares. Our framework therefore provides an inverse market share counterpart to the GEV class, encoding rich substitution patterns in closed-form inverse market shares that allow estimation by linear regression.

In particular, the IPDL model of \citet{fosgerau2024}, like the product differentiation logit model of \citet{bresnahan1997market}, relaxes the hierarchical nesting restriction of the nested logit models but does not allow partial membership. The flexible inverse logit model studied by \citet{Monardo2026} uses the pairwise nesting structure of the paired combinatorial logit \citep{chu1989paired,koppelman2000paired}, while our partial membership construction parallels the cross-nested logit model \citep{bierlaire2006theoretical}. These constructions yield inverse market share counterparts to all nesting-based GEV models, although the resulting demands are generally not identical to their GEV analogues.

Relative to GEV models, our models offer simpler estimation with market-level data: linear regression avoids repeated numerical inversion within GMM. They also admit a microfoundation beyond the ARUM.

\begin{table}[H]
\onehalfspacing
\centering
\caption{GEV generator vs our generator}\label{tab:GEV}
\setlength\tabcolsep{6pt}
\scalebox{0.8}{\begin{tabular}{c | c}
\cmidrule(lr){1-2}
\textbf{GEV generator $H(e^{\vd})$} & \textbf{Our generator} $\ln G_j(\vs)$\\[1ex]
\cmidrule(lr){1-2}

\emph{Logit} &
\emph{Logit} \\
$\displaystyle \sum_{j=0}^J e^{\delta_j}$ &
$\displaystyle\;\ln(s_j)$ \\[1ex]

\cmidrule(lr){1-2}

\emph{Two-Level Nested Logit} &
\emph{Two-Level Nested Logit} \\
$\displaystyle
\begin{gathered}
\sum_{g=1}^{G}\left(\sum_{j\in g}e^{\delta_j/(1-\mu)}\right)^{1-\mu}
\end{gathered}$ &
$\displaystyle\;(1-\mu)\ln(s_j)+\mu\ln\!\Bigl(\sum_{k\in g(j)}s_k\Bigr)$ \\
$\begin{gathered}
0\leq\mu<1
\end{gathered}$ &
$0\leq\mu<1$ \\[1ex]

\cmidrule(lr){1-2}

\emph{Three-Level Nested Logit} &
\emph{Three-Level Nested Logit} \\
$\displaystyle \sum_{g=1}^{G}
\left(
\sum_{h=1}^{H}
\left(
\left(
\sum_{j\in\cG_{hg}}e^{\delta_j/\mu_2}
\right)^{\mu_2}
\right)^{1/\mu_1}
\right)^{\mu_1}$ &
$\displaystyle
(1-\gamma_1-\gamma_2)\ln s_j
+\gamma_1\ln\!\Biggl(\sum_{k\in g_1(j)}s_k\Biggr)
+\gamma_2\ln\!\Biggl(\sum_{k\in g_2(j)}s_k\Biggr)$ \\[0.5ex]
$\begin{gathered}
0<\mu_2\leq\mu_1\leq1
\end{gathered}$ &
$\begin{gathered}
\gamma_1=\mu_1-\mu_2\geq0,\quad
\gamma_2=1-\mu_1\geq0,\quad
\gamma_1+\gamma_2<1
\end{gathered}$ \\[1ex]

\cmidrule(lr){1-2}

\emph{Product Differentiation Logit} &
\emph{IPDL} \\
$\displaystyle \sum_{g=1}^{G}a_g
\left(
\sum_{h=1}^{H_g}
\left(
\sum_{j\in\cG_{hg}}e^{\delta_j/\mu_g}
\right)^{\mu_g}
\right)$ &
$\displaystyle
\begin{gathered}
\left(1-\sum_{g=1}^G\mu_g\right)\ln s_j
+\sum_{g=1}^G\mu_g
\ln\!\Biggl(\sum_{k\in\cG_{h_g(j),\,g}}s_k\Biggr)
\end{gathered}$ \\[0.5ex]
$a_g>0$, $\sum_{g=1}^{G}a_g=1$, $0<\mu_g\leq1$ $\forall g$ &
$\mu_g\geq0$ for all $g$; $\sum_{g=1}^{G}\mu_g<1$ \\[1ex]

\cmidrule(lr){1-2}

\emph{Paired Combinatorial Logit} &
\emph{The Flexible Inverse Logit} \\
$\displaystyle \sum_{j=0}^{J-1}\sum_{k=j+1}^{J}
\left(e^{\delta_j/\mu_{jk}}+e^{\delta_k/\mu_{jk}}\right)^{\mu_{jk}}$ &
$\displaystyle\;
\Biggl(1-\sum_{k\neq j}\mu_{jk}\Biggr)\ln s_j
+\sum_{k\neq j}\mu_{jk}\ln(s_j+s_k)$ \\[0.5ex]
$0<\mu_{jk}\leq1$ $\forall j\neq k$ &
$\mu_{jk}=\mu_{kj}
\geq0\ \forall j\neq k,\quad
\sum_{k\neq j}\mu_{jk}<1\ \forall j$ \\[1ex]

\cmidrule(lr){1-2}

\emph{Cross-Nested Logit} &
\emph{Partial membership} \\
$\displaystyle \sum_{m\in\mathcal M}
\left(
\sum_{j\in\cJ}\omega_{jm}e^{\sigma_m\delta_j}
\right)^{\sigma/\sigma_m}$ &
$\displaystyle
\left(
1-\sum_{g\in\cG}\mu_g
\omega_{jg}
\right)\ln(s_j)
+\sum_{g\in\cG}\mu_g
\omega_{jg}
\ln\!\left(
\sum_{i\in\cJ}
\omega_{ig}s_i
\right)$ \\[0.5ex]
$\omega_{jm}\geq0$ $\forall j,m$,
$\sum_{m\in\mathcal M}\omega_{jm}>0$ $\forall j$ &
$\omega_{jg}\geq0\ \forall j,g$ \\
$\sigma>0$, $\sigma\leq\sigma_m$ $\forall m$ &
$\sum_{g\in\cG}\mu_g\omega_{jg}<1\ \forall j$,
$\mu_g\geq0\ \forall g$ \\

\cmidrule(lr){1-2}
\multicolumn{2}{p{16.7cm}}{\footnotesize{\emph{Notes:} Each row pairs a GEV model with an inverse market share model having the corresponding nesting structure. In the two-level nested logit model, $g(j)$ denotes the nest containing product $j$. In the three-level nested logit model, $g_1(j)$ denotes the subnest containing product $j$, and $g_2(j)$ denotes the nest containing that subnest. In the PDL and IPDL models, $g$ indexes partitions, $\{\cG_{hg}\}_{h=1}^{H_g}$ is the collection of groups in partition $g$, and $h_g(j)$ indexes the unique group in partition $g$ that contains product $j$. In the cross-nested logit and partial-membership models, $\omega_{jm}$ and $\omega_{jg}$ denote the corresponding nonnegative membership weights. Unlike cross-nested logit, partial membership allows
$\sum_{g\in\cG}\omega_{jg}=0$; then $\ln G_j(\vs)=\ln s_j$, giving
product $j$ the standard logit form.}}
\end{tabular}}
\end{table}

\section{Conclusion}\label{sec:ccl}

We introduce a class of demand models for market-level data through closed-form inverse market share functions. The class embeds the widely used logit and nested logit models, accommodates rich substitution patterns through arbitrary nesting structures, and is consistent with utility maximization. Because estimation requires only linear instrumental variable regression, these models are especially suited to settings with large choice sets, where numerical inversion is costly. Inverse market share functions offer a
useful way to construct empirical demand models that combine the
tractability of classical logit models with greater flexibility and
clear theoretical foundations. Future work includes data-driven selection of the nesting structure, extensions to dynamic and multi-category demand, and applications to markets with many products such as digital platforms.

\appendix
\appendixpage

\section{Mathematical Appendix}

\subsection{Notation}

We use italics for scalars and real-valued functions, boldface for vectors,
matrices and vector-valued functions, and calligraphic fonts for sets.
Vectors are column vectors by default:
$\vs=(s_0,\dots,s_J)^\top\in\R^{J+1}$.

$\Delta_J=\{\vs\in[0,\infty)^{J+1}:\sum_{j}s_j=1\}$ denotes the unit simplex in $\R^{J+1}$, and
$\Delta_J^\circ=\{\vs\in(0,\infty)^{J+1}:\sum_{j}s_j=1\}$ its relative interior.

Let $\vG:\R^{J+1}\to\R^{J+1}$, with $\vG(\vs)=(G_0(\vs),\ldots,G_J(\vs))$, be a vector-valued function, where each $G_j:\R^{J+1}\to\R$ is scalar-valued. For a vector-valued function $\vG$, the matrix $\mathbf{J}^{\vs}_{\vG}(\tilde{\vs}) = \left[\partial G_i(\tilde{\vs})/\partial s_j\right]_{i,j}$ denotes its Jacobian with respect to the vector $\vs$, evaluated at $\tilde{\vs}$. For a real-valued function $f:\R^{J+1}\rightarrow\R$, the vector $\nabla_{\vs} f(\tilde{\vs}) = \left[\partial f(\tilde{\vs})/\partial s_i\right]_i$ denotes its gradient with respect to the vector $\vs$, evaluated at $\tilde{\vs}$, and the matrix $\nabla^2_{\vs} f(\tilde{\vs}) = \left[\partial^2 f(\tilde{\vs})/\partial s_i\partial s_j\right]_{i,j}$ its Hessian.

A univariate function $f:\R\to\R$ applied to a vector is understood coordinate-wise; for example, $\ln(\vs) = (\ln s_0,\ldots,\ln s_J)^\top$. Moreover,  $\vone_{J+1}=(1,\ldots,1)^\top\in\R^{J+1}$ denotes the vector of ones, and $\mathbf{I}_{J+1}\in\R^{(J+1)\times(J+1)}$ the identity matrix.

\subsection{Homogeneous Functions}\label{appendix:homo}

A scalar-valued function $G_j:\Rpp^{J+1}\to\Rpp$ is homogeneous of degree $\kappa>0$ if $G_j(\lambda\vs)=\lambda^\kappa G_j(\vs)$ for every $\lambda>0$ and $\vs\in\Rpp^{J+1}$. A vector-valued function $\vG:\Rpp^{J+1}\to\Rpp^{J+1}$, with components $\vG(\vs)=(G_0(\vs),\ldots,G_J(\vs))$, is homogeneous of degree $\kappa$ if each component $G_j$ is homogeneous of degree $\kappa$.

We use the following standard result.

\begin{proposition}\label{prop:Euler_homo}
If $G:\Rpp^{J+1}\to\R$ is homogeneous of degree $1$,
then
\[
G(\vs)
=
\sum_{i=0}^J \frac{\partial G(\vs)}{\partial s_i}\,s_i
\quad\text{for all }\vs\in\Rpp^{J+1}.
\]
\end{proposition}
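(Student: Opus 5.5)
The plan is to differentiate the homogeneity identity with respect to the scaling factor and then evaluate at $\lambda=1$. Fix $\vs\in\Rpp^{J+1}$ and define the scalar function $\phi:(0,\infty)\to\R$ by $\phi(\lambda)=G(\lambda\vs)$. Since $\Rpp^{J+1}$ is a cone, $\lambda\vs\in\Rpp^{J+1}$ for every $\lambda>0$, so $\phi$ is well defined. By homogeneity of degree $1$, $\phi(\lambda)=\lambda G(\vs)$, hence $\phi'(\lambda)=G(\vs)$ for every $\lambda>0$.

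Next I compute $\phi'$ directly with the chain rule. This requires $G$ to be differentiable on $\Rpp^{J+1}$. That assumption is implicit in the statement, since the partial derivatives appear in it, and it holds in all our uses because generators are $C^2$ on the interior. The chain rule gives
\[
\phi'(\lambda)=\sum_{i=0}^J \frac{\partial G(\lambda\vs)}{\partial s_i}\,s_i .
\]

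Equating the two expressions for $\phi'$ and setting $\lambda=1$ yields $G(\vs)=\sum_{i}\partial_i G(\vs)\,s_i$, which is the claim.

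There is no real obstacle here. The only point needing care is the differentiability hypothesis that justifies the chain rule. If only the existence of partial derivatives were assumed, I would note that the identity is used only for components of generators, which are $C^2$, so total differentiability holds.
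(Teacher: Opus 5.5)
Your proof is correct and is essentially the standard argument the paper relies on by citing Carter's textbook: differentiate $\lambda\mapsto G(\lambda\vs)=\lambda G(\vs)$ with the chain rule and evaluate at $\lambda=1$. Your differentiability caveat is apt, since existence of partial derivatives alone would not suffice in general. The paper also applies the identity to the components of $\vH$ in Appendix~\ref{app:properties}, not only to generator components, but those are $C^2$ by Lemma~\ref{lemma:H}, so the caveat is met there too.
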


\begin{proof}
See proof of Proposition 4.3 in \citet[][p. 491]{carter2001foundations}.
\end{proof}

\subsection{Properties of the Generator}\label{Appendix:Generator}

 We first show that the generator satisfies an Euler-type equation, which will help simplify the utility maximization
problem.

\begin{lemma}\label{Lemma:Euler}
For any generator $\vG(\vs)$ and $\vs\in\Delta_J^\circ$,
\[
\sum_{j\in\cJ} s_j\,\frac{\partial\ln G_j(\vs)}{\partial s_k}
=
1\quad\forall\,k\in\cJ,
\]
or in matrix form,
$\mathbf{J}_{\ln\vG}^{\vs}(\vs)\,\vs
=\vone_{J+1}$.
\end{lemma}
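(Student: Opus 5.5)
The plan is to combine Euler's theorem (Proposition~\ref{prop:Euler_homo}) with the symmetry assumption in Definition~\ref{Def:Generator}(iii). Euler's theorem will be applied not to $\ln G_j$ itself but to its partial derivatives.

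\textbf{Step 1: homogeneity of the log-derivatives.} Since each $G_j$ is homogeneous of degree $1$ and positive on $\Rpp^{J+1}$, we have $\ln G_j(\lambda\vs)=\ln\lambda+\ln G_j(\vs)$ for $\lambda>0$. Differentiating this identity with respect to $s_k$ gives
\[
\lambda\,\frac{\partial\ln G_j}{\partial s_k}(\lambda\vs)=\frac{\partial\ln G_j}{\partial s_k}(\vs).
\]
So $\partial\ln G_j/\partial s_k$ is homogeneous of degree $-1$. Equivalently, differentiating $\ln G_j(\lambda\vs)$ in $\lambda$ at $\lambda=1$ yields
\[
\sum_{k} s_k\,\frac{\partial\ln G_j(\vs)}{\partial s_k}=1 \qquad \text{for every } j .
\]
In matrix form this reads $\mathbf{J}_{\ln\vG}^{\vs}(\vs)\,\vs=\vone_{J+1}$, where the contraction runs over the \emph{column} index. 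This is the row-sum version of the identity.

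\textbf{Step 2: transpose via symmetry.} The lemma as stated sums over the \emph{row} index $j$ with $k$ fixed, that is, it asserts $\vs^\top\mathbf{J}_{\ln\vG}^{\vs}(\vs)=\vone^\top$. By condition (iii) of Definition~\ref{Def:Generator}, $\mathbf{J}_{\ln\vG}^{\vs}(\vs)$ is symmetric on $\Delta_J^\circ$. Hence $\partial\ln G_j/\partial s_k=\partial\ln G_k/\partial s_j$, and the row sum from Step 1 becomes
\[
\sum_j s_j\,\frac{\partial \ln G_j}{\partial s_k}=\sum_j s_j\,\frac{\partial\ln G_k}{\partial s_j}=1 .
\]
This is exactly the claimed identity.

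\textbf{Regularity checks.} Two technical points need care. First, the differentiation in Step 1 requires differentiability on $\Rpp^{J+1}$, which holds because $\vG$ is $C^2$ on the interior. Second, it requires $G_j>0$ there so that the logarithm is defined. Positivity on $\Delta_J^\circ$ follows from (ii), and homogeneity extends it to all of $\Rpp^{J+1}$.

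Symmetry is assumed only on $\Delta_J^\circ$, but that is the only place it is used, so this restriction is harmless. I expect no genuine obstacle here. The only subtle point is recognizing that Euler's relation naturally gives the row-sum form, and that condition (iii) is what converts it into the column-sum form stated in the lemma.
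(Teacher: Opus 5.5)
Your proof is correct and follows the paper's route: both combine Euler's relation for the degree-one homogeneous component (you derive it in log form by differentiating $\ln G_j(\lambda\vs)=\ln\lambda+\ln G_j(\vs)$ at $\lambda=1$, while the paper cites Proposition~\ref{prop:Euler_homo} for $G_k$ and divides by $G_k$) with the symmetry in Definition~\ref{Def:Generator}(iii) to swap the summation index. One small quibble: the degree $-1$ homogeneity of the partial derivatives is not what delivers the identity (Euler applied to those partials would give a relation involving second derivatives), so your opening remark and the word ``equivalently'' are misleading, although the $\lambda$-differentiation you actually carry out is exactly right.
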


\begin{proof}
For any $k\in\cJ$ and $\vs\in\Delta_J^\circ$,

\begin{equation*}
\sum_{j \in \cJ}s_{j}\frac{\partial\ln G_j\left(\vs\right)}{\partial s_{k}}  =  \sum_{j \in \cJ}s_{j}\frac{\partial\ln G_k\left(\vs\right)}{\partial s_{j}}
=   \dfrac{\sum_{j \in \cJ}s_{j}\frac{\partial G_k\left(\vs\right)}{\partial s_{j}}}{G_k\left(\vs\right)} = \dfrac{G_k\left(\vs\right)}{G_k\left(\vs\right)} = 1,
\end{equation*}
where the first equality uses the symmetry of $\mathbf{J}_{\ln \vG}^{\vs}(\vs)$ and the third equality applies Proposition~\ref{prop:Euler_homo} to the homogeneous function $G_k(\vs)$.

\end{proof}

The following lemma establishes the invertibility of the generator $\vG$.

\begin{lemma}\label{prop:inversion}
A generator $\vG(\vs)$ is invertible on \(\mathbb R_{++}^{J+1}\).
\end{lemma}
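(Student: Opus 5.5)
The plan is to prove injectivity and surjectivity of $\vG$ separately, using a convex potential for $\ln\vG$ in both parts.

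\emph{Step 1 (extending condition (iii) to the whole orthant).} By homogeneity of degree $1$, $\ln G_j(\lambda\vs)=\ln\lambda+\ln G_j(\vs)$ for $\lambda>0$. Differentiating in $\vs$ gives $\mathbf{J}^{\vs}_{\ln\vG}(\lambda\vs)=\lambda^{-1}\mathbf{J}^{\vs}_{\ln\vG}(\vs)$. Every point of $\Rpp^{J+1}$ is a positive multiple of a point of $\Delta_J^\circ$. Hence the Jacobian $A(\vs):=\mathbf{J}^{\vs}_{\ln\vG}(\vs)$ is symmetric and positive definite on all of $\Rpp^{J+1}$.

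\emph{Step 2 (injectivity).} The orthant $\Rpp^{J+1}$ is convex and simply connected, and $\ln\vG$ is $C^1$ there with symmetric Jacobian. Hence, as in the footnote on potentials in Section~\ref{Sec:ClassModels}, there is a potential $\varphi$ with $\nabla_{\vs}\varphi=\ln\vG$. Its Hessian $A(\vs)$ is positive definite, so $\varphi$ is strictly convex. Then for $\vs\neq\vs'$ in the orthant,
\[(\ln\vG(\vs)-\ln\vG(\vs'))^\top(\vs-\vs')>0,\]
so $\ln\vG$ is injective. Since $\ln$ is injective on $\Rpp$, $\vG$ is injective. (Alternatively, apply the mean value theorem along the segment from $\vs'$ to $\vs$.)

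\emph{Step 3 (surjectivity onto $\Rpp^{J+1}$).} Fix a target $e^{\vd}$ with $\vd\in\R^{J+1}$. By homogeneity it suffices to find $\vs\in\Delta_J^\circ$ and a scalar $c$ with $\ln\vG(\vs)=\vd-c\vone_{J+1}$; then $\vG(e^{c}\vs)=e^{\vd}$. Consider the objective
\[\Psi(\vs)=\vd^\top\vs-\sum_j s_j\ln G_j(\vs)\quad\text{on }\Delta_J^\circ.\]
By Lemma~\ref{Lemma:Euler}, $\nabla_{\vs}\sum_j s_j\ln G_j=\ln\vG+A\vs=\ln\vG+\vone_{J+1}$. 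Therefore the Hessian of $\Psi$ is $-A$, which is negative definite, and $\Psi$ is strictly concave. At an interior maximizer, the Lagrange condition reads $\vd-\ln\vG(\vs)-\vone_{J+1}=\lambda\vone_{J+1}$, which is exactly the required equation.

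\emph{Step 4 (the main obstacle: an interior maximizer exists).} The plan is to take a maximizing sequence in $\Delta_J^\circ$, pass to a limit point in the compact set $\Delta_J$, and rule out limits on the boundary. Suppose $\vs^\ast$ lies on the boundary with $s^\ast_j=0$. Consider moving from points near $\vs^\ast$ in the direction $e_j-\vs$ within the simplex. The directional derivative of $\Psi$ in this direction is
\[\delta_j-\ln G_j(\vs)-1-(\text{a convex combination of }\delta_k-\ln G_k(\vs)-1).\]
Condition (ii) together with continuity gives $\ln G_j\to-\infty$ as $s_j\to0$.

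The delicate point is controlling the other terms $\ln G_k$ with $k\neq j$. Two sub-steps would handle this:
\begin{itemize}[label={}]
\item First, bound them above. Since $\vG$ is continuous on the compact set $\Delta_J$, $\ln G_k$ is bounded above there.
\item Second, deal with the fact that they may tend to $-\infty$ themselves, which is the genuine difficulty. Pick $j$ attaining $\min_k \ln G_k(\vs)$ along the sequence. The weighted average $\sum_k s_k\ln G_k$ is then at least this minimum, so the derivative in direction $e_j-\vs$ is at least
\[\delta_j-\max_k\delta_k-\min_k\ln G_k+\sum_k s_k\ln G_k\ \ge\ \delta_j-\max_k\delta_k.\]
\end{itemize}
The second sub-step only gives a bounded lower estimate; a strict $+\infty$ blow-up needs a refinement. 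I would try to obtain it by combining strict concavity of $\Psi$ along the segment towards the barycenter with the divergence $\ln G_j\to-\infty$. This would show that $\Psi$ strictly increases when moving inward from any point near the boundary, so no boundary point can be a supremum limit.

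\emph{Conclusion.} Once an interior maximizer is established, strict concavity makes it unique, and Step 3 yields the required preimage. Combined with Step 2, this shows that $\vG$ is a bijection of $\Rpp^{J+1}$ onto itself.
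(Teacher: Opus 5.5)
\textbf{Assessment: Steps 1--3 are sound, but there is a genuine gap in Step~4, which carries the whole of surjectivity.}

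Your injectivity argument (strict monotonicity of $\ln\vG$ on the convex orthant, from its symmetric positive definite Jacobian) is valid. It is more elementary than the paper's route. The paper shows that the log-ratio map $\mathbf R$ on the simplex is a proper local homeomorphism and then invokes Mas-Colell's global homeomorphism theorem, which gives injectivity and surjectivity together.

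Step~4 you leave unfinished: the estimate you actually derive is bounded, and the ``refinement'' is only announced. The missing idea is the reverse half of condition~(ii). If $\vs^n\to\vs^*\in\mathrm{bd}(\Delta_J)$, then every $i$ with $s^*_i>0$ has $G_i(\vs^*)>0$, so $\ln G_i(\vs^n)$ has a finite limit. Only coordinates with $s^*_k=0$ diverge to $-\infty$. Your own quantity already blows up once you use this. With $j$ the minimizer, the convex combination minus the minimum is $\sum_k s_k(\ln G_k-\ln G_j)\geq s_i(\ln G_i-\ln G_j)\to+\infty$. Bounding it below by zero discards exactly this information. The upper bounds on $\ln G_k$ from compactness point the wrong way here.

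There is also a second, smaller issue. A positive inward derivative at each point near the boundary does not by itself stop the supremum from being approached only at the boundary. You need a gain that is uniform along the maximizing sequence, and concavity supplies it.

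\textbf{How to close the gap.} Both points can be repaired along the lines you suggest. Let $\bar\vs$ be the barycenter and $\vs^*_t=\vs^*+t(\bar\vs-\vs^*)$. Since $\vone^\top(\bar\vs-\vs^*)=0$,
\[
\frac{d}{dt}\Psi(\vs^*_t)=\sum_{k\in\cJ}\bigl(\delta_k-\ln G_k(\vs^*_t)\bigr)(\bar s_k-s^*_k).
\]
The terms with $s^*_k=0$ carry weight $\bar s_k>0$ and tend to $+\infty$ as $t\downarrow0$. The other terms converge, by the observation above. Hence this derivative exceeds $2$ at some $t_0\in(0,1)$.

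Now take a maximizing sequence $\vs^n\to\vs^*$. The map $t\mapsto\Psi(\vs^n+t(\bar\vs-\vs^n))$ is concave on a segment in $\Delta_J^\circ$, so its derivative is nonincreasing. By continuity of $\nabla\Psi$ at the interior point $\vs^*_{t_0}$, that derivative is at least $1$ at $t_0$ for large $n$. Therefore $\Psi(\vs^n+t_0(\bar\vs-\vs^n))\geq\Psi(\vs^n)+t_0$. Letting $n\to\infty$ gives $\Psi(\vs^*_{t_0})\geq\sup\Psi+t_0$. This is a contradiction, and it also shows $\sup\Psi<\infty$. So every limit point of a maximizing sequence is interior, and your Step~3 completes surjectivity.

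\textbf{Comparison with the paper.} The completed argument is essentially Lemma~\ref{lemma:Omega} proved without relying on the inversion lemma. The boundary analysis it needs, using both directions of condition~(ii), is the same one the paper uses to prove that $\mathbf R$ is proper.
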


\begin{proof}
Let $\mathcal D=\{\mathbf z\in\mathbb R^J:z_j>0,\
\sum_{j=1}^Jz_j<1\}$ and
$\overline{\mathcal D}
=
\{\mathbf z\in\mathbb R^J:z_j\geq0,\
\sum_{j=1}^Jz_j\leq1\}$. Define
$\iota:\overline{\mathcal D}\to\Delta_J$ by
$\iota(\mathbf z)=(1-\sum_{j=1}^Jz_j,z_1,\ldots,z_J)$.
Its restriction to $\mathcal D$ is a bijection onto
$\Delta_J^\circ$. Let $\mathbf B$ denote the $(J+1)\times J$ Jacobian of $\iota$.

Define $\mathbf R:\mathcal D\to\mathbb R^J$ by
\begin{equation*}
\mathbf R(\mathbf z)
=
\left(
\ln\frac{G_j(\iota(\mathbf z))}
        {G_0(\iota(\mathbf z))}
\right)_{j=1,\ldots,J}.
\end{equation*}
Its Jacobian
$\mathbf B^\top
\mathbf J_{\ln\mathbf G}^{\vs}(\iota(\mathbf z))
\mathbf B$
is positive definite because
$\mathbf J_{\ln\mathbf G}^{\vs}$ is positive definite and $\mathbf B$
has full column rank. Thus it is nonsingular, so the inverse function
theorem implies that $\mathbf R$ is a local homeomorphism.

Let $\mathbf z^n\in\mathcal D$ satisfy
$\mathbf z^n\to\mathbf z^0\in\operatorname{bd}(\mathcal D)$, and set
$\vs^n=\iota(\mathbf z^n)\to\vs^0=\iota(\mathbf z^0)$.
If $s_0^0>0$, then $s_j^0=0$ for some $j\geq1$.
Definition~\ref{Def:Generator}(ii) and continuity imply that
$G_j(\vs^n)\to0$ and $G_0(\vs^n)\to G_0(\vs^0)>0$, so
$R_j(\mathbf z^n)\to-\infty$. If $s_0^0=0$, then $s_j^0>0$ for some
$j\geq1$, so $G_0(\vs^n)\to0$,
$G_j(\vs^n)\to G_j(\vs^0)>0$, and
$R_j(\mathbf z^n)\to+\infty$. Hence $|\mathbf R(\mathbf z^n)|\to\infty$ whenever
$\mathbf z^n\to\operatorname{bd}(\mathcal D)$.

For every compact $K\subset\mathbb R^J$, continuity implies that
$\mathbf R^{-1}(K)$ is closed in $\mathcal D$, while
$|\mathbf R(\mathbf z)|\to\infty$ as $\mathbf z\to\operatorname{bd}(\mathcal D)$ implies that
$\mathbf R^{-1}(K)$ is bounded away from
$\operatorname{bd}(\mathcal D)$. Since
$\overline{\mathcal D}$ is compact, $\mathbf R^{-1}(K)$ is compact;
hence $\mathbf R$ is proper. Since $\mathcal D$ is homeomorphic to
$\mathbb R^J$ via the logit transformation and  $\mathbf R:\mathcal D\to\mathbb R^J$ is a proper local homeomorphism, \citet[Theorem~2, p.~327]{mas1979two} implies that $\mathbf R$ is a homeomorphism.

Having shown that $\mathbf R$ is a homeomorphism, we use homogeneity to
establish that $\mathbf G$ is surjective and injective on
$\mathbb R_{++}^{J+1}$. To prove surjectivity, fix $\mathbf q\in\mathbb R_{++}^{J+1}$. Since
$\mathbf R$ is onto, there exists $\mathbf z\in\mathcal D$ such that
$\mathbf R(\mathbf z)=(\ln(q_j/q_0))_{j=1}^J$. Setting
$\vs=\iota(\mathbf z)$, the definition of $\mathbf R$ gives
$G_j(\vs)/G_0(\vs)=q_j/q_0$ for every $j=1,\ldots,J$. Thus
$\mathbf G(\vs)$ and $\mathbf q$ have the same component ratios. Setting
$\lambda=q_0/G_0(\vs)>0$ matches their scale, so
$\lambda G_j(\vs)=q_j$ for every $j=0,\ldots,J$. Homogeneity therefore
gives $\mathbf G(\lambda\vs)=\lambda\mathbf G(\vs)=\mathbf q$, proving
that $\mathbf G$ is surjective.

To prove injectivity, suppose that
$\mathbf G(\mathbf a)=\mathbf G(\mathbf b)$. Let
$T=\sum_{k=0}^J a_k$, $U=\sum_{k=0}^J b_k$,
$\mathbf x=\mathbf a/T$, and $\mathbf y=\mathbf b/U$, so
$\mathbf x,\mathbf y\in\Delta_J^\circ$. Homogeneity gives
$\mathbf G(\mathbf a)=T\mathbf G(\mathbf x)$ and
$\mathbf G(\mathbf b)=U\mathbf G(\mathbf y)$. Hence
$G_j(\mathbf x)/G_0(\mathbf x)=G_j(\mathbf y)/G_0(\mathbf y)$ for every
$j=1,\ldots,J$. Let $\mathbf z_x=\iota^{-1}(\mathbf x)$ and
$\mathbf z_y=\iota^{-1}(\mathbf y)$. By the definition of $\mathbf R$,
the equality
$G_j(\mathbf x)/G_0(\mathbf x)=G_j(\mathbf y)/G_0(\mathbf y)$ for every
$j=1,\ldots,J$ implies
$\mathbf R(\mathbf z_x)=\mathbf R(\mathbf z_y)$. Since $\mathbf R$ is
injective, $\mathbf z_x=\mathbf z_y$. Applying $\iota$ gives
$\mathbf x=\mathbf y$.
Equality of the zeroth components gives
$T G_0(\mathbf x)=U G_0(\mathbf x)$. Since $G_0(\mathbf x)>0$, $T=U$,
and therefore $\mathbf a=T\mathbf x=U\mathbf y=\mathbf b$.
\end{proof}

The following lemma establishes properties of the inverse $\vH$ of the generator $\vG$.

\begin{lemma}\label{lemma:H}
The inverse $\vH(e^{\vd})$ is twice continuously
differentiable and satisfies the following:
\begin{enumerate}
\item[(i)] $\vH$ is homogeneous of degree $1$;
\item[(ii)] For every \(j\in\mathcal J\), $H_j(e^{\vd})\rightarrow+\infty
\quad\text{as}\quad
\delta_j\to+\infty$
holding \(\delta_k\), \(k\neq j\), fixed;
\item[(iii)] $\mathbf{J}^{\vd}_{\vH \circ\exp}(\vd)$
is symmetric and positive definite for every $\boldsymbol{\delta}\in\R^{J+1}$.
\end{enumerate}
\end{lemma}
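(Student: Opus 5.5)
We derive all three properties of $\vH=\vG^{-1}$ from Lemma~\ref{prop:inversion} and the inverse function theorem, using the chain rule at the composition $\vd\mapsto e^{\vd}\mapsto\vH(e^{\vd})$.

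\emph{Smoothness.} By Lemma~\ref{prop:inversion}, $\vG$ is a bijection of $\Rpp^{J+1}$ onto itself. Its Jacobian is
\[
\mathbf J_{\vG}^{\vs}(\vs)=\operatorname{diag}(\vG(\vs))\,\mathbf J_{\ln\vG}^{\vs}(\vs).
\]
By homogeneity of degree $1$, $\mathbf J_{\ln\vG}^{\vs}(\lambda\vs)=\lambda^{-1}\mathbf J_{\ln\vG}^{\vs}(\vs)$. Positive definiteness on $\Delta_J^\circ$ therefore extends to all of $\Rpp^{J+1}$, by writing $\vs=\lambda\tilde\vs$ with $\tilde\vs\in\Delta_J^\circ$. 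Since $\vG>0$ on the interior, $\mathbf J_{\vG}^{\vs}$ is nonsingular everywhere on $\Rpp^{J+1}$. The inverse function theorem then makes $\vH$ a $C^2$ map, because $\vG$ is $C^2$.

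\emph{(i) Homogeneity.} Take $\mathbf q=\vG(\vs)$. Homogeneity of $\vG$ gives $\vG(\lambda\vs)=\lambda\mathbf q$, so $\vH(\lambda\mathbf q)=\lambda\vs=\lambda\vH(\mathbf q)$.

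\emph{(iii) Symmetry and positive definiteness.} Write $\vs=\vH(e^{\vd})$, so that $\ln\vG(\vs)=\vd$. Differentiating with respect to $\vd$ gives
\[
\mathbf J_{\ln\vG}^{\vs}(\vs)\,\mathbf J^{\vd}_{\vH\circ\exp}(\vd)=\mathbf I_{J+1},
\qquad\text{hence}\qquad
\mathbf J^{\vd}_{\vH\circ\exp}(\vd)=\bigl[\mathbf J_{\ln\vG}^{\vs}(\vs)\bigr]^{-1}.
\]
The inverse of a symmetric positive definite matrix is symmetric positive definite. The scaling argument above guarantees that $\mathbf J_{\ln\vG}^{\vs}$ is symmetric positive definite at every point of $\Rpp^{J+1}$, not only on the simplex, so (iii) holds for every $\vd\in\R^{J+1}$.

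\emph{(ii) Divergence.} This is the step I expect to be the main obstacle. Fix $\delta_k$ for $k\neq j$ and let $\delta_j\to\infty$. By (iii), the diagonal entry $\partial H_j(e^{\vd})/\partial\delta_j$ is positive, so $H_j$ is increasing in $\delta_j$. It remains to rule out a finite limit, and I plan to argue by contradiction. Suppose $H_j$ stays bounded. By (iii), the vector $\mathbf u=\mathbf J^{\vd}_{\vH\circ\exp}(\vd)\,e_j$ has $u_j>0$ and, by symmetry, $u_k=\partial H_j/\partial\delta_k$. Homogeneity gives the Euler relation $\sum_k\partial H_j/\partial\delta_k=H_j$.

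I would combine this with the behaviour of $\vG$. Along the path, $\vG(\vs)=e^{\vd}$ has $G_j\to\infty$ while $G_k$ stays fixed for $k\neq j$, where $\vs=\vH(e^{\vd})$. Normalize $\tilde\vs=\vs/\sum_i s_i\in\Delta_J^\circ$. Then the ratios $G_k(\tilde\vs)/G_j(\tilde\vs)=e^{\delta_k-\delta_j}\to0$ for all $k\neq j$.

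By the properness established in the proof of Lemma~\ref{prop:inversion}, applying Definition~\ref{Def:Generator}(ii) to the normalized points forces $\tilde s_k\to0$ for all $k\neq j$, so $\tilde\vs\to e_j$. Thus $\vG(\tilde\vs)\to\vG(e_j)$ by continuity, and $G_j(e_j)\in(0,\infty)$. Writing $\vG(\vs)=(\sum_i s_i)\,\vG(\tilde\vs)$, the component $G_j(\vs)=(\sum_i s_i)\,G_j(\tilde\vs)\to\infty$ forces $\sum_i s_i\to\infty$. Since $\tilde s_j\to1$, it follows that $s_j=H_j(e^{\vd})\to\infty$, contradicting boundedness.

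The delicate points are the continuity of $\vG$ up to the boundary and $G_j(e_j)>0$, which holds by (ii) because $e_j\in\Delta_J$ with $j$-th coordinate $1$. Both are provided by the definition of a generator.
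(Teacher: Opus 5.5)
Your proposal is correct and follows essentially the paper's route. You obtain smoothness and (iii) from the inverse function theorem via $\mathbf J^{\vd}_{\vH\circ\exp}(\vd)=[\mathbf J^{\vs}_{\ln\vG}(\vs)]^{-1}$, with a welcome explicit scaling argument extending positive definiteness from $\Delta_J^\circ$ to $\Rpp^{J+1}$. For (ii) you normalize to the simplex, get $\tilde\vs\to e_j$ and total mass $\to\infty$, and conclude $H_j\to\infty$. The contradiction framing, the vector $\mathbf u$, and the appeal to properness are unnecessary: $G_k(\tilde\vs)=e^{\delta_k-\delta_j}G_j(\tilde\vs)\to0$ because $G_j$ is bounded on the compact $\Delta_J$, and continuity with Definition~\ref{Def:Generator}(ii) then gives $\tilde s_k\to0$ directly.
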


\begin{proof}
Twice continuous differentiability follows from the twice continuous
differentiability of $\vG(\vs)$ and the positive definiteness of
$\mathbf{J}_{\ln\vG}^{\vs}(\vs)$, via the inverse function theorem. Homogeneity in (i) follows from \citet[][Lemma~4, Appendix B1]{fosgerau2024}. For (ii), let $\mathbf{q}=\vH(e^{\vd})$, $T=\sum_{k\in\mathcal{J}} q_k$, and $\vs=\mathbf{q}/T$. By (i), $e^{\delta_k}=G_k(\mathbf{q})=T G_k(\vs)$, for $k \in\mathcal{J}$. Fix $j$ and let $\delta_j\to+\infty$ with $\delta_k$, $k\neq j$, fixed. Since $G_j(\vs)$ is continuous on $\mathbb{R}^{J+1}_+$, its restriction to the compact simplex
$\Delta_J$ is bounded. Thus $T=e^{\delta_j}/G_j(\vs)\to+\infty$.
Hence, for $k\neq j$, $G_k(\vs)=e^{\delta_k}/T\to 0$. By continuity and the boundary condition $G_k(\vs)=0$ iff $s_k=0$, this implies
$s_k\to0$ for all $k\neq j$, so $s_j\to1$. Therefore, $H_j(e^{\vd})=q_j=Ts_j\to+\infty$. Symmetry and positive definiteness in (iii) follow from applying the implicit
function theorem to the relation $\vG(\vH(e^{\vd}))
=e^{\vd}$ and the positive definiteness of
$\mathbf{J}_{\ln\vG}^{\vs}(\vs)$.
\end{proof}

\section{Proofs for Section~\ref{Section:Class}}
\label{Appendix:ProofSec2}

\subsection{Proof of the Statements in Subsection \ref{Sec:ClassModels}}\label{app:properties}

\begin{itemize}
    \item[(i)] By Lemma~\ref{prop:inversion}, $\vG (\vs)$ is invertible, with inverse
$\vH (e^{\vd})$. The inverse share function is
$\delta_j = \ln G_j(\vs)+c$. Solving for $\vs$ gives $\vs=\vH(e^{\vd})/\sum_k H_k(e^{\vd})$,
which is \eqref{eq:demandsystem}.  See the proof of Proposition 2 in \citet[][Appendix B2]{fosgerau2024} for details. Homogeneity of $\vH(e^{\vd})$ (Lemma \ref{lemma:H}) implies
$\vH(e^{\vd+\mathbf{k}})
=e^k\vH(e^{\vd})$. Since $\boldsymbol{\sigma}(\boldsymbol{\delta})
=
\mathbf{H}(e^{\boldsymbol{\delta}})/\sum_\ell H_\ell(e^{\boldsymbol{\delta}})$, the common term $e^k$ cancels, so $\boldsymbol{\sigma}(\boldsymbol{\delta}+\mathbf{k})
=
\boldsymbol{\sigma}(\boldsymbol{\delta})$.

\item[(ii)] Differentiating
$\ln G_j(\vs)+c=\delta_j$ with respect to $\vd$
gives $\mathbf{J}^{\vd}_{\boldsymbol{\sigma}}(\vd)
=
\bigl(\mathbf{J}^{\vs}_{\ln\vG}(\vs)\bigr)^{-1}
-
\vs\vs^\top$. Symmetry follows from the symmetry of $\mathbf{J}^{\vs}_{\ln\vG}(\vs)$. Positive semidefiniteness follows from its positive definiteness and the homogeneity identity $\mathbf{J}^{\vs}_{\ln\vG}(\vs)\vs=\mathbf{1}_{J+1}$. See the proof of Proposition~3(ii) in \citet[][Appendix~B.3]{fosgerau2024} for details.

\item[(iii)]  The consumer surplus
$CS(\vd)=\ln\bigl(\sum_k H_k(e^{\vd})\bigr)$ follows from the Williams-Daly-Zachary theorem:
\begin{align*}
\frac{\partial\,CS\!\left(\vd\right)}{\partial \delta_i}
&=
\frac{\sum_{k\in\mathcal J}
        \frac{\partial H_k\!\left(e^{\vd}\right)}{\partial \delta_i}}
     {\sum_{j\in\mathcal J} H_j\!\left(e^{\vd}\right)}
=
\frac{ \sum_{k\in\mathcal J}
        \frac{\partial H_i\!\left(e^{\vd}\right)}{\partial \delta_k}}
     { \sum_{j\in\mathcal J} H_j\!\left(e^{\vd}\right)} \\[6pt]
&=
\frac{ \sum_{k\in\mathcal J}
        \frac{\partial H_i\!\left(e^{\vd}\right)}{\partial e^{\delta_k}}\,
        e^{\delta_k}
       }
     { \sum_{j\in\mathcal J} H_j\!\left(e^{\vd}\right)}
=
\frac{H_i\!\left(e^{\vd}\right)}
     { \sum_{j\in\mathcal J} H_j\!\left(e^{\vd}\right)}
= \sigma_i(\vd)\,,
\end{align*}
where the second equality uses the symmetry of \(\mathbf{J}_{\mathbf{H}\circ\exp}(\boldsymbol{\delta})\) and the
fourth equality uses the Euler equation for the homogeneous function \(\mathbf H(e^{\boldsymbol{\delta}})\) (Lemma \ref{lemma:H}).

Hence $\nabla_{\boldsymbol{\delta}}CS(\boldsymbol{\delta})=\boldsymbol{\sigma}(\boldsymbol{\delta})$. Since
$\boldsymbol{\sigma}(\boldsymbol{\delta})\in \Delta_J^\circ$, the function $CS(\vd)$ is increasing. Its Hessian is
$\mathbf{J}^{\boldsymbol{\delta}}_{\boldsymbol{\sigma}}(\boldsymbol{\delta})$, which is positive semidefinite by (ii), so $CS(\vd)$ is convex.

Lastly,
\begin{align*}
CS(\boldsymbol{\delta}+\mathbf{k})& = \ln\!\left(\sum_{\ell\in \mathcal{J}} H_\ell\!\big(e^{\boldsymbol{\delta}+\mathbf{k}}\big)\right) = \ln\!\left(\sum_{\ell\in \mathcal{J}} H_\ell\!\big(e^{\mathbf{k}}\, e^{\boldsymbol{\delta}}\big)\right) = \ln\!\left(\sum_{\ell\in \mathcal{J}} e^{k}\, H_\ell\!\big(e^{\boldsymbol{\delta}}\big)\right) \\
&= \ln\!\left(e^{k}\sum_{\ell\in \mathcal{J}} H_\ell\!\big(e^{\boldsymbol{\delta}}\big)\right)
= k + \ln\!\left(\sum_{\ell\in \mathcal{J}} H_\ell\!\big(e^{\boldsymbol{\delta}}\big)\right)= k + CS(\boldsymbol{\delta}),\end{align*}
where the third equality uses the homogeneity of $\mathbf{H}(e^{\boldsymbol{\delta}})$.

\end{itemize}

\subsection{Proof for the Nesting and Partial Membership  Constructions}

We first require the following definition.

\begin{definition}
An almost generator is a continuous function
$\vG:\R_+^{J+1}\to\R_+^{J+1}$, twice continuously differentiable on
$\Rpp^{J+1}$, that satisfies condition (i), the implication $G_j(\vs)=0\Rightarrow s_j=0$, for $\vs \in\Delta_J$, $j \in\mathcal{J}$, from condition (ii),
and condition (iii) of Definition~\ref{Def:Generator}, except that the
Jacobian $\mathbf{J}_{\ln\vG}^{\vs}(\vs)$ is only required to be positive
semidefinite rather than positive definite.
\end{definition}

We now establish our main construction: combining almost generators with component-specific weights and correspondingly weighted inputs yields a new generator.

\begin{proposition}\label{prop:WAV}
Let $\{\vG_k(\vs)\}_{k=1}^K$, with
$\vG_k(\vs)=(G_{k,0}(\vs),\ldots,G_{k,J}(\vs))$, be almost generators.
Let $(\omega_{jk})_{j\in\cJ,\,k=1,\ldots,K}$ be nonnegative weights
satisfying $\sum_{k=1}^K\omega_{jk}=1$ for all $j\in\cJ$.

Assume that, for each $j\in\cJ$, there exists $k$ such that
$\omega_{jk}>0$ and $\vG_k$ is a generator. For each $k$, let $\mathbf D_k
=
\operatorname{diag}(\omega_{0k},\ldots,\omega_{Jk})$. Assume that, for every $k$, the map
$\vs\mapsto\ln\vG_k(\mathbf D_k\vs)$ is twice continuously
differentiable on $\Rpp^{J+1}$.
Assume also that, for every $k$ and every $\vs\in\Rpp^{J+1}$, $\mathbf J_{\ln\vG_k}^{\vs}(\mathbf D_k\vs)$ is symmetric positive semidefinite and, whenever $\vG_k$ is a generator,
is positive definite on $\operatorname{range}(\mathbf D_k)$.

Define
\[
G_j(\vs)
=
\prod_{k:\,\omega_{jk}>0}
G_{k,j}(\mathbf D_k\vs)^{\omega_{jk}},
\qquad j\in\cJ.
\]
Equivalently, on $\Rpp^{J+1}$,
\[
\ln G_j(\vs)
=
\sum_{k:\,\omega_{jk}>0}
\omega_{jk}\ln G_{k,j}(\mathbf D_k\vs).
\]
Then $\vG$ is a generator.
\end{proposition}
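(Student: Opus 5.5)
We verify the conditions of Definition~\ref{Def:Generator} one at a time for $\vG$ as defined in Proposition~\ref{prop:WAV}. The main work is condition (iii), which we handle by writing the Jacobian of $\ln\vG$ as a sum of congruence-transformed positive semidefinite matrices.

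\emph{Regularity and continuity.} Each $G_{k,j}$ is continuous on $\R_+^{J+1}$, and $\mathbf D_k\vs\in\R_+^{J+1}$. Hence $G_j$, a finite product of nonnegative powers of continuous nonnegative functions, is continuous and nonnegative. Twice continuous differentiability on $\Rpp^{J+1}$ follows from the assumed smoothness of $\vs\mapsto\ln\vG_k(\mathbf D_k\vs)$: take the weighted sum and exponentiate.

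\emph{Homogeneity (i).} Since $\mathbf D_k(\lambda\vs)=\lambda\mathbf D_k\vs$ and each $G_{k,j}$ is homogeneous of degree one, we get $G_j(\lambda\vs)=\prod_k\lambda^{\omega_{jk}}G_{k,j}(\mathbf D_k\vs)^{\omega_{jk}}=\lambda G_j(\vs)$. The last equality uses $\sum_k\omega_{jk}=1$.

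\emph{Boundary condition (ii).} Suppose $s_j=0$. Pick $k$ with $\omega_{jk}>0$ and $\vG_k$ a generator. Then $(\mathbf D_k\vs)_j=0$. After rescaling $\mathbf D_k\vs$ to the simplex (if it is nonzero) and using homogeneity, condition (ii) for $\vG_k$ gives $G_{k,j}(\mathbf D_k\vs)=0$. If $\mathbf D_k\vs=0$, the same conclusion follows by homogeneity and continuity, since $G_{k,j}(\mathbf 0)=0$. Either way $G_j(\vs)=0$.

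Conversely, suppose $G_j(\vs)=0$. Then some factor with $\omega_{jk}>0$ vanishes. The almost-generator implication gives $(\mathbf D_k\vs)_j=\omega_{jk}s_j=0$, so $s_j=0$.

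\emph{Jacobian (iii).} By the chain rule,
\[
\mathbf J^{\vs}_{\ln\vG}(\vs)=\sum_{k=1}^K \mathbf D_k\,\mathbf J^{\vs}_{\ln\vG_k}(\mathbf D_k\vs)\,\mathbf D_k .
\]
To see this, note that row $j$ of the $k$-th term is $\omega_{jk}\,\partial\ln G_{k,j}/\partial s_i$ evaluated at $\mathbf D_k\vs$, times $\omega_{ik}$. Terms with $\omega_{jk}=0$ contribute zero rows, which is consistent with the restriction $\omega_{jk}>0$ in the definition. Each summand is symmetric because $\mathbf J_{\ln\vG_k}(\mathbf D_k\vs)$ is, so the sum is symmetric. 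Each summand is also positive semidefinite, being a congruence transform of a positive semidefinite matrix.

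For strict definiteness, take $\mathbf x\neq0$ with $\mathbf x^\top\mathbf J_{\ln\vG}\mathbf x=0$. Every term $(\mathbf D_k\mathbf x)^\top\mathbf J_{\ln\vG_k}(\mathbf D_k\vs)(\mathbf D_k\mathbf x)$ must then vanish. Choose $j$ with $x_j\neq0$ and $k$ with $\omega_{jk}>0$ and $\vG_k$ a generator. Then $\mathbf D_k\mathbf x$ is a nonzero element of $\operatorname{range}(\mathbf D_k)$. By the assumed positive definiteness on that range, the $k$-th term is strictly positive, a contradiction.

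This step is the crux. It is exactly where the hypothesis of definiteness on $\operatorname{range}(\mathbf D_k)$ is needed, rather than definiteness on all of $\R^{J+1}$, because $\mathbf D_k$ may be singular. It is also the point where the hypotheses are evaluated at $\mathbf D_k\vs$, which may lie off the interior simplex. I would check that "for every $\vs\in\Rpp^{J+1}$" in the assumptions covers this.
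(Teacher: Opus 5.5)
Your proof is correct and follows the paper's argument essentially step for step: the same continuity and homogeneity checks, the same two-case boundary argument, and the same chain-rule decomposition $\mathbf J_{\ln\vG}^{\vs}(\vs)=\sum_k\mathbf D_k\mathbf J_{\ln\vG_k}(\mathbf D_k\vs)\mathbf D_k$, with strict positivity supplied by a generator block having $\omega_{jk}>0$. The only step to tighten is the converse boundary direction: apply the almost-generator implication only after rescaling $\mathbf D_k\vs$ to $\Delta_J$ by homogeneity, as the paper does and as you already did in the forward direction (the case $\mathbf D_k\vs=\mathbf 0$ is trivial).
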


\begin{proof}
Continuity on $\R_+^{J+1}$ follows from the product representation and
the continuity of each $\vG_k$. By assumption, $\ln\vG$, and hence
$\vG$, is twice continuously differentiable on $\Rpp^{J+1}$.

Homogeneity follows from that of each $\vG_k$. For $\lambda>0$,
\[
\begin{aligned}
G_j(\lambda\vs)
&=
\prod_{k:\,\omega_{jk}>0}
G_{k,j}(\mathbf D_k\lambda\vs)^{\omega_{jk}}  =
\prod_{k:\,\omega_{jk}>0}
\bigl\{\lambda G_{k,j}(\mathbf D_k\vs)\bigr\}^{\omega_{jk}}
=
\lambda^{\sum_k\omega_{jk}}G_j(\vs)
=
\lambda G_j(\vs).
\end{aligned}
\]

We next verify the boundary condition. Let $T_k(\vs)=\sum_{\ell\in\cJ}\omega_{\ell k}s_\ell$. Suppose first that $s_j=0$. Choose $k$ such that $\omega_{jk}>0$ and
$\vG_k$ is a generator. If $T_k(\vs)=0$, then
$\mathbf D_k\vs=\mathbf 0$, so continuity and homogeneity imply
$G_{k,j}(\mathbf D_k\vs)=0$. If $T_k(\vs)>0$, define $\vx^{(k)}=\mathbf D_k\vs/T_k(\vs)\in\Delta_J$. Because $s_j=0$, we have $x_j^{(k)}=0$. The boundary condition for a generator therefore gives $G_{k,j}(\vx^{(k)})=0$, and homogeneity gives $G_{k,j}(\mathbf D_k\vs)
=
T_k(\vs)G_{k,j}(\vx^{(k)})=0$. Since this factor has a positive exponent, $G_j(\vs)=0$.

Conversely, suppose that $G_j(\vs)=0$. Some $k$ with $\omega_{jk}>0$
must then satisfy $G_{k,j}(\mathbf D_k\vs)=0$. If $T_k(\vs)=0$, then
$\omega_{jk}s_j=0$, and hence $s_j=0$. If $T_k(\vs)>0$, homogeneity gives
$G_{k,j}(\vx^{(k)})=0$. The boundary condition for an almost generator then gives
$x_j^{(k)}=0$. Because $x_j^{(k)}
=
\omega_{jk}s_j/T_k(\vs)$ and $\omega_{jk}>0$, it follows that $s_j=0$. Therefore, $G_j(\vs)=0
\Leftrightarrow s_j=0$.

Finally, the chain rule gives
\[
\mathbf J_{\ln\vG}^{\vs}(\vs)
=
\sum_{k=1}^K
\mathbf D_k
\mathbf J_{\ln\vG_k}^{\vx}(\mathbf D_k\vs)
\mathbf D_k.
\]
Every summand is symmetric positive semidefinite. To establish positive
definiteness, let $\vz\neq\mathbf 0$ and choose $j$ such that $z_j\neq0$.
By assumption, there exists $k$ such that $\omega_{jk}>0$ and $\vG_k$ is
a generator. Hence $\mathbf D_k\vz\neq\mathbf 0$, and positive
definiteness on $\operatorname{range}(\mathbf D_k)$ gives
\[
(\mathbf D_k\vz)^\top
\mathbf J_{\ln\vG_k}^{\vx}(\mathbf D_k\vs)
(\mathbf D_k\vz)>0.
\]
All remaining terms are nonnegative, so $\vz^\top\mathbf J_{\ln\vG}^{\vs}(\vs)\vz>0$. Thus $\mathbf J_{\ln\vG}^{\vs}(\vs)$ is symmetric positive definite, and
$\vG$ is a generator.
\end{proof}

We now use Proposition~\ref{prop:WAV} to obtain the partial-membership
construction.

\begin{corollary}\label{cor:partial-membership}
Let $\cG\subseteq2^{\cJ}$ be a collection of nests with associated
nonnegative parameters $(\mu_g)_{g\in\cG}$. Let
$(\omega_{jg})_{j\in\cJ,\,g\in\cG}$ be nonnegative weights satisfying
\[
\sum_{g\in\cG}\mu_g\omega_{jg}<1,
\qquad j\in\cJ.
\]

Then $\vG$ defined by \eqref{eq:ConstructionMember} on
$\mathbb R_{++}^{J+1}$ admits a unique continuous extension to
$\mathbb R_+^{J+1}$, and this extension is a generator.
\end{corollary}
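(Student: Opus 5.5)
The plan is to write the partial-membership generator \eqref{eq:ConstructionMember} as a weighted combination of the form in Proposition~\ref{prop:WAV} and check that proposition's hypotheses.

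\textbf{Components.} Index the components by $\{0\}\cup\cG$. Component $0$ is the logit generator $\vG_0(\vs)=\vs$, with weights
\[
\tilde\omega_{j0}=1-\sum_{g}\mu_g\omega_{jg}>0 .
\]
For each nest $g$, the component $\vG_g$ has all coordinates equal to the nest total, $G_{g,j}(\vx)=\sum_i x_i$. The weight of product $j$ on component $g$ should be $\tilde\omega_{jg}=\mu_g\omega_{jg}$.

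\textbf{Matching the input scaling.} Proposition~\ref{prop:WAV} feeds $\mathbf D_k\vs$ into each component. Since $\mathbf D_g$ uses $\mu_g\omega_{jg}$, we get
\[
\ln\Bigl(\sum_i\mu_g\omega_{ig}s_i\Bigr)=\ln\mu_g+\ln\Bigl(\sum_i\omega_{ig}s_i\Bigr).
\]
This differs from \eqref{eq:ConstructionMember} only by an additive constant in $\ln G_j$, which changes nothing. Alternatively, one can take $G_{g,j}(\vx)=\mu_g^{-1}\sum_ix_i$, which is still homogeneous of degree one. For component $0$, $\ln(\tilde\omega_{j0}s_j)$ likewise differs from $\ln s_j$ by a constant.

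After these rescalings the product formula of Proposition~\ref{prop:WAV} reproduces \eqref{eq:ConstructionMember} exactly. It therefore also supplies the continuous extension to $\R_+^{J+1}$: each factor $G_{k,j}(\mathbf D_k\vs)^{\tilde\omega_{jk}}$ is continuous on the closed orthant, being a nonnegative power of a continuous nonnegative function. Uniqueness of the extension is automatic because $\Rpp^{J+1}$ is dense.

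\textbf{Component hypotheses.}
\begin{itemize}
\item Weights: they are nonnegative and sum to one by construction. Every $j$ has $\tilde\omega_{j0}>0$, and $\vG_0$ is a genuine generator.
\item Logit component: $\mathbf J_{\ln\vG_0}=\operatorname{diag}(1/x_j)$ is symmetric and positive definite on the interior. Hence at $\mathbf D_0\vs$, which is interior, it is positive definite on $\operatorname{range}(\mathbf D_0)=\R^{J+1}$.
\item Nest components: each $\vG_g$ is an almost generator. Homogeneity holds. The Jacobian of $\ln\vG_g$ is $\vone\vone^\top/\sum_ix_i$, which is symmetric positive semidefinite.
\item Smoothness: $\vs\mapsto\ln\sum_i\mu_g\omega_{ig}s_i$ is $C^2$ on $\Rpp^{J+1}$, because the restriction to $\cG$ with $\mu_g>0$ and $\sum_i\omega_{ig}>0$ makes the argument strictly positive there.
\end{itemize}

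\textbf{Main obstacle: the boundary implication for nest components.} An almost generator needs $G_{g,j}(\vx)=0\Rightarrow x_j=0$ on $\Delta_J$. This holds vacuously, since $\sum_ix_i=1$ on the simplex. The real concern is whether Proposition~\ref{prop:WAV}'s boundary argument survives when $T_g(\vs)=0$ at boundary points. That case is handled inside the proposition's proof, so once the hypotheses are verified it goes through. I would double-check this step and the additive-constant rescaling carefully. With the hypotheses in place, Proposition~\ref{prop:WAV} yields that the extension is a generator.
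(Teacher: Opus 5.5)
Your proposal is correct and follows essentially the same route as the paper: you apply Proposition~\ref{prop:WAV} over the index set $\{0\}\cup\cG$, after discarding groups with $\mu_g=0$ or $\sum_i\omega_{ig}=0$. A rescaled logit block ($\mathbf D_0^{-1}\vx$) supplies the generator for every $j$, and the rank-one nest blocks $\mu_g^{-1}(\sum_i x_i)\vone_{J+1}$ serve as almost generators. Your alternative of absorbing the scalings into $j$-specific constants also works, because multiplying each $G_j$ by a positive constant leaves homogeneity, the zero set, and $\mathbf J^{\vs}_{\ln\vG}$ unchanged.
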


\begin{proof}
After omitting groups with $\mu_g=0$ or
$\sum_i\omega_{ig}=0$, set
\[
a_j=1-\sum_{g\in\cG}\mu_g\omega_{jg}>0,
\qquad
\widetilde\omega_{j0}=a_j,
\qquad
\widetilde\omega_{jg}=\mu_g\omega_{jg}.
\]
Then
$\widetilde\omega_{j0}+\sum_{g\in\cG}\widetilde\omega_{jg}=1$
for every $j\in\cJ$.

Apply Proposition~\ref{prop:WAV} with index set $\{0\}\cup\cG$. For the
first block, let $\mathbf D_0=\operatorname{diag}(a_0,\ldots,a_J)$ and
$\widetilde{\vG}_0(\vx)=\mathbf D_0^{-1}\vx$. Since $a_j>0$ for every
$j$, $\widetilde{\vG}_0$ is a generator and
$\widetilde{\vG}_0(\mathbf D_0\vs)=\vs$.

For each $g\in\cG$, let
\[
\mathbf D_g
=
\mu_g\operatorname{diag}(\omega_{0g},\ldots,\omega_{Jg})
\]
and define
\[
\widetilde{\vG}_g(\vx)
=
\frac{1}{\mu_g}
\left(\sum_{i\in\cJ}x_i\right)\vone_{J+1}.
\]
Its log-Jacobian is
\[
\mathbf J_{\ln\widetilde{\vG}_g}^{\vx}(\vx)
=
\frac{1}{\sum_i x_i}\vone_{J+1}\vone_{J+1}^\top,
\]
which is symmetric positive semidefinite. Moreover,
$\widetilde{\vG}_g$ is homogeneous, and any zero component implies
$\vx=\mathbf 0$; hence it is an almost generator. Furthermore,
\[
\widetilde{\vG}_g(\mathbf D_g\vs)
=
\left(\sum_{i\in\cJ}\omega_{ig}s_i\right)\vone_{J+1}.
\]

The first block supplies the required generator for every $j$, so
Proposition~\ref{prop:WAV} yields a continuous generator
$\vG:\mathbb R_+^{J+1}\to\mathbb R_+^{J+1}$. On
$\mathbb R_{++}^{J+1}$, this generator satisfies
\[
\ln G_j(\vs)
=
a_j\ln s_j
+
\sum_{g\in\cG}\mu_g\omega_{jg}
\ln\left(\sum_{i\in\cJ}\omega_{ig}s_i\right),
\]
which is \eqref{eq:ConstructionMember}. Since
$\mathbb R_{++}^{J+1}$ is dense in $\mathbb R_+^{J+1}$, this generator
is the unique continuous extension of the map defined by
\eqref{eq:ConstructionMember}.

\end{proof}

The general nesting construction is an immediate special case.

\begin{corollary}\label{cor:nesting}
Let $\cG\subseteq2^{\cJ}$ be a collection of nests with associated
parameters $(\mu_g)_{g\in\cG}$ satisfying $\mu_g\geq0$ for every
$g\in\cG$ and
\[
\sum_{g\in\cG:j\in g}\mu_g<1,
\qquad j\in\cJ.
\]

Then the map $\vG$ defined by \eqref{eq:generatorGN} on
$\mathbb R_{++}^{J+1}$ admits a unique continuous extension to
$\mathbb R_+^{J+1}$, and this extension is a generator.
\end{corollary}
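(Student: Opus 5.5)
Corollary~\ref{cor:nesting} follows by specializing Corollary~\ref{cor:partial-membership} to binary membership weights, so no new analysis is needed. Set $\omega_{jg}=\mathbf{1}\{j\in g\}$ for every $j\in\cJ$ and $g\in\cG$. These weights are nonnegative. The parameters $\mu_g$ are nonnegative by assumption. The constraint of Corollary~\ref{cor:partial-membership} becomes
\[
\sum_{g\in\cG}\mu_g\omega_{jg}=\sum_{g\in\cG:\,j\in g}\mu_g<1,\qquad j\in\cJ,
\]
which is exactly the hypothesis of the present statement.

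Next I will check that \eqref{eq:ConstructionMember} reduces to \eqref{eq:generatorGN} on $\Rpp^{J+1}$ under these weights. The own-share coefficient $1-\sum_g\mu_g\omega_{jg}$ becomes $1-\sum_{g:\,j\in g}\mu_g$. In the nest terms, $\mu_g\omega_{jg}$ vanishes unless $j\in g$, and $\sum_{i\in\cJ}\omega_{ig}s_i=\sum_{i\in g}s_i$. So the two formulas coincide on the open orthant.

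The only point needing care is the preliminary step in the proof of Corollary~\ref{cor:partial-membership}, which discards groups with $\mu_g=0$ or $\sum_i\omega_{ig}=0$. Here $\sum_i\omega_{ig}=|g|$, so discarding happens only for empty nests or nests with $\mu_g=0$. Those groups contribute nothing to \eqref{eq:generatorGN}, so the identification of the two formulas is unaffected.

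Corollary~\ref{cor:partial-membership} then gives that the map defined by \eqref{eq:generatorGN} on $\Rpp^{J+1}$ has a unique continuous extension to $\R_+^{J+1}$, and that this extension is a generator. Uniqueness follows because $\Rpp^{J+1}$ is dense in $\R_+^{J+1}$.
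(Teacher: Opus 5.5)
Your proposal is correct and follows the paper's own proof: set $\omega_{jg}=\mathbf{1}\{j\in g\}$ in Corollary~\ref{cor:partial-membership}, check that the constraint and \eqref{eq:ConstructionMember} reduce to the nesting hypothesis and \eqref{eq:generatorGN}, and then take the unique continuous extension and the generator property from that corollary. Your remark that dropped groups (empty nests or $\mu_g=0$) contribute nothing is a small refinement the paper leaves implicit.
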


\begin{proof}
Set $\omega_{jg}=\mathbf 1\{j\in g\}$ in
Corollary~\ref{cor:partial-membership}. Then
\[
\sum_{g\in\cG}\mu_g\omega_{jg}
=
\sum_{g\in\cG:j\in g}\mu_g<1
\qquad\text{and}\qquad
\sum_{i\in\cJ}\omega_{ig}s_i
=
\sum_{i\in g}s_i.
\]

Thus \eqref{eq:ConstructionMember} reduces to
\eqref{eq:generatorGN} on $\mathbb R_{++}^{J+1}$. The result,
including the existence and uniqueness of the continuous extension,
therefore follows from Corollary~\ref{cor:partial-membership}.

\end{proof}

\section{Proofs for Section~\ref{Section:Utility}}
\label{Appendix:ProofSec3}

\subsection{Representative Consumer}

We first establish properties of $\Omega(\vs)$.

\begin{lemma}\label{lemma:Omega} Let $\Omega^\circ(\vs)=-\sum_j s_j\ln G_j(\vs)$, $\vs\in\Delta_J^\circ$, for a generator $\vG(\vs)$, and suppose that $\Omega^\circ$ admits a
finite continuous extension $\Omega:\Delta_J\to\mathbb R$. Then $\Omega(\vs)$ is strictly concave on $\Delta_J^\circ$ and $|\nabla_{\vs}\Omega(\vs)|\to\infty$ as $\vs\to\mathrm{bd}(\Delta_J)$ within $\Delta_J^\circ$. Moreover, for every $\vd\in\mathbb R^{J+1}$, the function
$\vd^\top\vs+\Omega(\vs)$ has a unique maximizer in
$\Delta_J^\circ$.
\end{lemma}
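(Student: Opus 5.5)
Compute the gradient and Hessian of $\Omega^\circ$ on $\Delta_J^\circ$ (working on $\Rpp^{J+1}$ and then restricting to directions tangent to the simplex). Differentiating $\Omega^\circ(\vs)=-\sum_j s_j\ln G_j(\vs)$ gives
\[
\frac{\partial\Omega^\circ(\vs)}{\partial s_k}=-\ln G_k(\vs)-\sum_{j\in\cJ}s_j\frac{\partial\ln G_j(\vs)}{\partial s_k}=-\ln G_k(\vs)-1,
\]
by Lemma~\ref{Lemma:Euler}. Hence $\nabla_{\vs}\Omega^\circ(\vs)=-\ln\vG(\vs)-\vone_{J+1}$, and its Jacobian is $\nabla^2_{\vs}\Omega^\circ(\vs)=-\mathbf J^{\vs}_{\ln\vG}(\vs)$. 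By Definition~\ref{Def:Generator}(iii) this matrix is negative definite on all of $\R^{J+1}$, in particular on the tangent space $\{\vz:\vone^\top\vz=0\}$. Therefore $\Omega$ is strictly concave on the convex set $\Delta_J^\circ$. Concavity on all of $\Delta_J$ then follows by continuity of the extension.

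For the boundary behaviour, take $\vs^n\in\Delta_J^\circ$ with $\vs^n\to\vs^0\in\mathrm{bd}(\Delta_J)$, so $s^0_j=0$ for some $j$. Definition~\ref{Def:Generator}(ii) and continuity of $\vG$ give $G_j(\vs^n)\to0$, hence $\partial\Omega^\circ/\partial s_j=-\ln G_j(\vs^n)-1\to+\infty$. Since $\vs^0$ has some strictly positive component $k$, $G_k(\vs^n)\to G_k(\vs^0)\in(0,\infty)$, and that gradient component stays bounded. So $|\nabla_{\vs}\Omega|\to\infty$. Moreover, the directional derivative along $\mathbf e_j-\mathbf e_k$ (a tangent direction) diverges to $+\infty$. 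This tangential statement is what the maximization argument actually needs.

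For the maximizer, $f(\vs)=\vd^\top\vs+\Omega(\vs)$ is continuous on the compact set $\Delta_J$, so a maximizer exists. Uniqueness in $\Delta_J^\circ$ follows from strict concavity there. The main obstacle is excluding boundary maximizers. Suppose $\vs^0\in\mathrm{bd}(\Delta_J)$ with $s^0_j=0$ and $s^0_k>0$, and consider $\vs^0+t(\mathbf e_j-\mathbf e_k)$ for small $t>0$. These points lie in the relative interior of a face, or in $\Delta_J^\circ$ after a small interior perturbation. I would instead move along the segment toward an interior point $\bar\vs$, namely $\vs(t)=\vs^0+t(\bar\vs-\vs^0)$. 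For $t\in(0,1]$ the function $\phi(t)=f(\vs(t))$ is concave and differentiable on $(0,1)$, with
\[
\phi'(t)=(\bar\vs-\vs^0)^\top\bigl(\vd-\ln\vG(\vs(t))-\vone\bigr).
\]
As $t\downarrow0$, each component with $s^0_j=0$ has $\bar s_j-s^0_j>0$ and $-\ln G_j(\vs(t))\to+\infty$, while the other components stay bounded. Hence $\phi'(t)\to+\infty$. By the mean value theorem, $\phi(t)>\phi(0)$ for small $t$, so $\vs^0$ is not a maximizer. Consequently the maximizer is interior and unique. It satisfies the first-order condition $\vd-\ln\vG(\vs)-\vone=c'\vone$, which is consistent with Lemma~\ref{prop:inversion}.
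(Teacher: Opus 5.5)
Your proof is correct. The gradient $-\ln\vG(\vs)-\vone_{J+1}$ via Lemma~\ref{Lemma:Euler}, the Hessian $-\mathbf J^{\vs}_{\ln\vG}(\vs)$, and the blow-up of $-\ln G_j$ at a vanishing coordinate coincide with the paper's argument.

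Where you take a genuinely different route is existence and interiority of the maximizer.

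\emph{The paper's route.} It does not use compactness. It invokes Lemma~\ref{prop:inversion}, the global homeomorphism result resting on Mas-Colell's theorem for proper local homeomorphisms, to produce a unique $\vs^*\in\Delta_J^\circ$ solving $\ln G_j(\vs^*)-\ln G_0(\vs^*)=\delta_j-\delta_0$. It notes that these are the first-order conditions on the simplex, and then uses concavity and continuity to conclude that this critical point is a global maximizer over $\Delta_J$.

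\emph{Your route.} You get a maximizer from compactness of $\Delta_J$ and continuity of the extension $\Omega$. You then exclude boundary maximizers by showing that the derivative along a segment toward an interior point diverges to $+\infty$.

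\emph{What each buys.} Your argument is more elementary and self-contained. It needs no global inversion theorem, and it makes the lemma's boundary blow-up claim do real work; in the paper's proof that claim is established but never used. It also yields, under the continuous-extension hypothesis, an independent proof of Lemma~\ref{prop:inversion} on the simplex. Existence of a solution to the first-order conditions comes from your argument, and uniqueness from strict concavity, since every interior critical point of a concave function is a global maximizer. The paper's route, by contrast, reuses a lemma it needs anyway to define the model for generators whose $\Omega^\circ$ need not extend. It also identifies the maximizer directly as the inverse-share solution, which is exactly what Proposition~\ref{Prop:RC} uses.

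Your tangential remark, that the derivative along $\mathbf e_j-\mathbf e_k$ diverges, is a sharper and more meaningful version of the paper's $|\nabla_{\vs}\Omega|\to\infty$. The only tidying needed is to delete the abandoned sentence about moving along $\vs^0+t(\mathbf e_j-\mathbf e_k)$, since the segment argument replaces it.
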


\begin{proof}
By Lemma~\ref{Lemma:Euler}, on $\Delta_J^\circ$,
\[
\nabla_{\vs}\Omega(\vs)
=
-\vone_{J+1}-\ln\vG(\vs)
\]
and
\[
\nabla^2_{\vs}\Omega(\vs)
=
-\mathbf J_{\ln\vG}^{\vs}(\vs),
\]
which is negative definite by definition of a generator, so $\Omega$ is
strictly concave.

To verify the boundary behavior, let
$\vs^n\in\Delta_J^\circ$ converge to
$\vs^0\in\mathrm{bd}(\Delta_J)$ and choose $j$ such that $s_j^0=0$.
Condition~(ii) of Definition~\ref{Def:Generator} and continuity imply
$G_j(\vs^n)\to G_j(\vs^0)=0$, so
$|\ln\vG(\vs^n)|\to\infty$. Hence $|\nabla_{\vs}\Omega(\vs^n)|
\geq
|\ln\vG(\vs^n)|-|\vone_{J+1}|
\rightarrow\infty$.

Finally, fix $\vd\in\mathbb R^{J+1}$.
Lemma~\ref{prop:inversion} gives a unique
$\vs^*\in\Delta_J^\circ$ satisfying
\[
\ln G_j(\vs^*)-\ln G_0(\vs^*)
=
\delta_j-\delta_0,
\qquad j=1,\ldots,J.
\]
Using the expression for $\nabla_{\vs}\Omega(\vs)$ above, these equations
are precisely the first-order conditions on the simplex for maximizing
$\vd^\top\vs+\Omega(\vs)$. Concavity and continuity make $\vs^*$ a
global maximizer. Strict concavity implies uniqueness.
\end{proof}

In the utility maximization program  \eqref{eq:maxU}, the budget constraint is always binding since $\alpha>0$ and $y>\max_{j\in\mathcal{J}}p_{j}$.
Substituting the budget constraint into the direct utility in \eqref{eq:maxU}, with $\delta_{j}=v_{j}-\alpha p_{j}$, the representative consumer chooses $\mathbf{s}\in \Delta_J$ to maximize
\begin{equation*}
\alpha y+\sum_{j\in\mathcal{J}}\delta_{j}s_{j}-\sum_{j\in\mathcal{J}}s_{j} \ln G_j
\left( \mathbf{s}\right).
\end{equation*}

Lemma \ref{lemma:Omega} makes the maximizer interior, where the Lagrangian is
\[
\mathcal{L}(\vs,\lambda)
= \alpha y +
\sum_j\delta_j s_j
-\sum_j s_j\ln G_j(\vs)
+\lambda\Bigl(1-\sum_j s_j\Bigr).
\]
The first-order conditions are $\sum_{j\in\mathcal{J}}s_{j}  =1$ and
\begin{equation*}
  \delta_{j} -\ln G_j\left(\mathbf{s}\right)-\sum_{k \in \mathcal{J}}s_{k}\frac{\partial \ln G_k\left(\mathbf{s}\right)}{\partial s_{j}}-\lambda =0,\quad j \in \mathcal{J},
\end{equation*}
which, by Lemma \ref{Lemma:Euler}, simplify to
\begin{equation}
    \delta_{j}-\ln G_j\left(\mathbf{s}\right)-1-\lambda =0, \quad j\in\mathcal{J}, \label{eq:FOCSimplified}
\end{equation}
or equivalently,
\[
\delta_j = \ln G_j(\vs)+c
\quad\text{with}\ c=1+\lambda.
\]

Lemma \ref{lemma:Omega} gives a unique
interior solution. This proves Proposition~\ref{Prop:RC}.

\subsection{ARUM}

Consider an ARUM with utilities $u_j=\delta_j+\varepsilon_j$ and let
$\CS(\vd)=\mathbb{E}[\max_j u_j]$.

The
Williams-Daly–Zachary theorem gives choice probabilities
$P_j(\vd)=\partial \CS(\vd)
/\partial\delta_j$. Define
\[
\overline{H}_j(e^{\vd})
=
\frac{\partial}{\partial\delta_j}
e^{\CS(\vd)}
=
P_j(\vd)e^{\CS(\vd)}.
\]
Summing over $j\in\mathcal{J}$ and using $\sum_j P_j(\vd)=1$ yields
\[
e^{\CS(\vd)}
=
\sum_j \overline{H}_j(e^{\vd}),
\]
and hence
\[
P_j(\vd)
=
\frac{\overline{H}_j(e^{\vd})}
{\sum_k \overline{H}_k(e^{\vd})}.
\]

Lemma~\ref{lem:ARUM_invert} implies Proposition~\ref{prop:GL_ARUM}: every ARUM corresponds to some generator-based inverse demand, but the converse fails because generator-based systems can
display complementarity in demand, which ARUMs rule out by construction.

\begin{lemma}\label{lem:ARUM_invert}
Under the preceding ARUM assumptions, suppose that $\vP$ is twice
continuously differentiable on
$\Lambda=\{\vd\in\R^{J+1}:\sum_{j\in\cJ}\delta_j=0\}$. Then the map
$\vd\mapsto\overline{\vH}(e^{\vd})$ is a bijection from $\R^{J+1}$ to
$\Rpp^{J+1}$. Moreover,
\[
\vd\mapsto
\frac{\overline{\vH}(e^{\vd})}
{\sum_{j\in\cJ}\overline H_j(e^{\vd})}
\]
is a bijection from $\Lambda$ to $\Delta_J^\circ$. Therefore,
$\overline{\vH}:\Rpp^{J+1}\to\Rpp^{J+1}$ has the well-defined inverse
$\overline{\vG}=\overline{\vH}^{-1}:
\Rpp^{J+1}\to\Rpp^{J+1}$. Furthermore, $\overline{\vG}$ is a generator.
\end{lemma}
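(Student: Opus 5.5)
The plan is to work with $\CS$ directly. I will first record the structure of $\overline{\vH}$, then prove the two bijections, and finally verify the generator conditions of Definition~\ref{Def:Generator} for $\overline{\vG}=\overline{\vH}^{-1}$.

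\textbf{Structure of $\overline{\vH}$.} Translation invariance of the ARUM gives $\CS(\vd+k\vone_{J+1})=\CS(\vd)+k$. Hence $\overline{\vH}(e^{k}e^{\vd})=e^{k}\overline{\vH}(e^{\vd})$, so $\overline{\vH}$ is homogeneous of degree $1$. Differentiating $\overline H_j=P_je^{\CS}$ gives
\[
\mathbf J^{\vd}_{\overline{\vH}\circ\exp}(\vd)=e^{\CS(\vd)}\bigl(\nabla^2_{\vd}\CS(\vd)+\vP\vP^\top\bigr).
\]
This matrix is symmetric because it contains a Hessian. It is $C^1$, so $\overline{\vH}$ is $C^2$, since $\vP$ is $C^2$ and $\CS$ is $C^3$. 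It is positive semidefinite by convexity of $\CS$.

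For positive definiteness I will use the positive density of $\veps$. It gives $\partial P_j/\partial\delta_k<0$ for $j\neq k$, and the rows of $\nabla^2\CS$ sum to zero. So $\nabla^2\CS$ is a weighted Laplacian of a complete graph with strictly positive weights. Its kernel is therefore exactly $\operatorname{span}(\vone_{J+1})$. Now take $\vz=c\vone_{J+1}\neq\mathbf 0$; then $\vz^\top\vP\vP^\top\vz=c^2>0$. Hence the bracket is positive definite.

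\textbf{Bijections.} Write any $\vd$ uniquely as $\vd=\boldsymbol\lambda+k\vone_{J+1}$ with $\boldsymbol\lambda\in\Lambda$. Then $\overline{\vH}(e^{\vd})=e^{k+\CS(\boldsymbol\lambda)}\vP(\boldsymbol\lambda)$, and the normalized map equals $\vP|_\Lambda$, which is a bijection onto $\Delta_J^\circ$ by the standing assumption.

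Given $\mathbf q\in\Rpp^{J+1}$, set $\vs=\mathbf q/\sum_i q_i$ and $\boldsymbol\lambda=(\vP|_\Lambda)^{-1}(\vs)$. Then choose the unique $k$ with $e^{k+\CS(\boldsymbol\lambda)}=\sum_iq_i$. This proves surjectivity. Injectivity follows by reversing these steps, since the ratios of $\overline{\vH}(e^{\vd})$ determine $\boldsymbol\lambda$ and the sum determines $k$.

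\textbf{Generator conditions in the interior.} On $\Rpp^{J+1}$ the inverse $\overline{\vG}$ exists, is homogeneous of degree $1$, and is $C^2$ by the inverse function theorem. Moreover,
\[
\mathbf J^{\vs}_{\ln\overline{\vG}}(\vs)=\bigl[\mathbf J^{\vd}_{\overline{\vH}\circ\exp}(\vd)\bigr]^{-1}\quad\text{at }\vs=\overline{\vH}(e^{\vd}),
\]
which is symmetric positive definite. This gives conditions (i) and (iii).

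\textbf{Boundary: the main obstacle.} The remaining, and hardest, step is the continuous extension of $\overline{\vG}$ to $\R_+^{J+1}$ together with condition (ii). Take $\vs\in\Delta_J^\circ$. Then $\overline{\vG}(\vs)=e^{\vd}$, where $\vd$ solves $\vP(\vd)=\vs$ and $\CS(\vd)=0$.

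Finite means give $\CS(\vd)\ge\delta_i+\mathbb E[\varepsilon_i]$, so every $\delta_i$ is bounded above. Hence $\overline{\vG}$ is bounded on $\Delta_J^\circ$.

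Now let $\vs^n\to\vs^0\in\operatorname{bd}(\Delta_J)$. If $s^0_j=0$ but $\delta_j^n$ did not tend to $-\infty$, then the positive density, with the other indices bounded above, would keep $P_j$ bounded away from $0$. This is a contradiction, so $G_j(\vs^n)\to0$. Conversely, if $s^0_j>0$, the constraint $\CS(\vd^n)=0$ combined with $P_j$ bounded away from zero should force $\delta^n_j$ to stay bounded below, so $G_j(\vs^n)$ stays positive.

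To get actual convergence, not merely bounds, I would take limits along subsequences. I would then show the limit is unique by passing to the reduced ARUM on the products with $s^0_j>0$: the vanishing alternatives have $\delta_j\to-\infty$ and drop out of $\CS$ and $\vP$. The limit is then identified with $\overline{\vG}$ of that reduced model, which also gives $G_j=0\Leftrightarrow s_j=0$. Finally, homogeneity extends the construction from $\Delta_J$ to all of $\R_+^{J+1}$, with $\overline{\vG}(\mathbf 0)=\mathbf 0$.
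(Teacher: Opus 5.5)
Your proposal is correct (to the same standard as the paper's proof) and follows that proof essentially step for step. The shared steps are homogeneity from the shift property of $\CS$, the Jacobian $e^{\CS}(\nabla^2_{\vd}\CS+\vP\vP^\top)$ together with the inverse function theorem for smoothness and condition (iii), the decomposition $\vd=\boldsymbol\lambda+k\vone_{J+1}$ for both bijections, and a subsequence argument passing to the ARUM restricted to $\{j:s^0_j>0\}$ for the boundary condition. You differ only in two places. You work on the simplex with $\CS=0$ rather than on the orthant; your ``should force'' step is closed by the paper's bound $\max_i\delta_i\geq\CS(\vd)-\mathbb{E}[\max_i\varepsilon_i]$. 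You also prove positive definiteness on $\Lambda$ by a weighted-Laplacian argument rather than citing Hofbauer and Sandholm.

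One caveat applies equally to the paper. The claim $\partial P_j/\partial\delta_k<0$ does not follow from a merely almost-everywhere positive density, even when $\vP$ is smooth. For example, take $J=1$, $\varepsilon_0$ standard normal, and $\varepsilon_1-\varepsilon_0$ independent of it with density proportional to $t^2e^{-t^2}$; the cross-partial then vanishes at $\delta_0=\delta_1$. So that step implicitly needs something like a continuous positive density.
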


\begin{proof}
The expected maximum utility $\CS(\vd)$ is finite and convex
\citep{mcfadden1981econometric}. Since
$\vP(\vd)=\nabla_{\vd}\CS(\vd)$ is twice continuously differentiable on
$\Lambda$ by assumption and is invariant to common shifts, it is twice
continuously differentiable on $\R^{J+1}$. Hence $\CS(\vd)$ is three
times continuously differentiable. Because the density of $\veps$ is
strictly positive, the proof of Theorem~2.1 in
\citet{hofbauer2002global} implies that
$\nabla_{\vd}^2\CS(\vd)$ is symmetric positive semidefinite, satisfies
$\nabla_{\vd}^2\CS(\vd)\vone_{J+1}=\mathbf0$, and is positive definite
on $\Lambda$.

The strictly positive density also implies absolute continuity and full
support. Hence \citet[][Corollary 1]{sorensen2022mcfadden} implies that
$\vd\mapsto\vP(\vd)$ is a bijection from $\Lambda$ to
$\Delta_J^\circ$.

Write any $\vd\in\R^{J+1}$ uniquely as
$\vd=\boldsymbol{\lambda}+c\vone_{J+1}$, where
$\boldsymbol{\lambda}\in\Lambda$ and $c\in\R$. Since
$\vP(\boldsymbol{\lambda}+c\vone_{J+1})
=
\vP(\boldsymbol{\lambda})$ and
$\CS(\boldsymbol{\lambda}+c\vone_{J+1})
=
\CS(\boldsymbol{\lambda})+c$,
$\overline{\vH}(e^{\boldsymbol{\lambda}+c\vone_{J+1}})
=
e^{\CS(\boldsymbol{\lambda})+c}\vP(\boldsymbol{\lambda})$.

For surjectivity, take any $\mathbf q\in\Rpp^{J+1}$. Let
$S=\sum_{j\in\cJ}q_j$ and
$\vs=\mathbf q/S\in\Delta_J^\circ$. Choose
$\boldsymbol{\lambda}\in\Lambda$ such that
$\vP(\boldsymbol{\lambda})=\vs$ and set
$c=\ln S-\CS(\boldsymbol{\lambda})$. Then
$\overline{\vH}(e^{\boldsymbol{\lambda}+c\vone_{J+1}})
=
e^{\CS(\boldsymbol{\lambda})+c}\vP(\boldsymbol{\lambda})
=
S\vs
=
\mathbf q$.

For injectivity, suppose
$\overline{\vH}(e^{\boldsymbol{\lambda}+c\vone_{J+1}})
=
\overline{\vH}(e^{\boldsymbol{\lambda}'+c'\vone_{J+1}})$ for
$\boldsymbol{\lambda},\boldsymbol{\lambda}'\in\Lambda$. Normalizing both
sides gives
$\vP(\boldsymbol{\lambda})=\vP(\boldsymbol{\lambda}')$, so
$\boldsymbol{\lambda}=\boldsymbol{\lambda}'$. Summing coordinates then
gives
$e^{\CS(\boldsymbol{\lambda})+c}
=
e^{\CS(\boldsymbol{\lambda})+c'}$, so $c=c'$. Hence $\vd=\vd'$,
proving injectivity.

On $\Lambda$,
$\overline{\vH}(e^{\vd})/
\sum_{j\in\cJ}\overline H_j(e^{\vd})
=
\vP(\vd)$, so the normalized map is a bijection from $\Lambda$ to
$\Delta_J^\circ$.

For smoothness, define
$\mathbf F(\vd)
=
\overline{\vH}(e^{\vd})
=
\nabla_{\vd}e^{\CS(\vd)}$. Its Jacobian is given by
$\mathbf J_{\mathbf F}^{\vd}(\vd)
=
e^{\CS(\vd)}
\left(
\nabla_{\vd}^2\CS(\vd)
+
\vP(\vd)\vP(\vd)^\top
\right)$.
This matrix is symmetric positive definite. Indeed, for any
$\vz\neq\mathbf0$, write
$\vz=\mathbf a+c\vone_{J+1}$ with $\mathbf a\in\Lambda$. Since
$\nabla_{\vd}^2\CS(\vd)\vone_{J+1}=\mathbf0$, if
$\mathbf a\neq\mathbf0$, then
$\vz^\top\nabla_{\vd}^2\CS(\vd)\vz
=
\mathbf a^\top\nabla_{\vd}^2\CS(\vd)\mathbf a>0$.
If $\mathbf a=\mathbf0$, then
$\vz=c\vone_{J+1}\neq\mathbf0$ and
$\vz^\top\vP(\vd)\vP(\vd)^\top\vz
=
c^2(\vone_{J+1}^\top\vP(\vd))^2
=
c^2>0$.
Since $\mathbf F$ is twice continuously differentiable, the inverse
function theorem implies that
$\mathbf F^{-1}=\ln\overline{\vG}$, and hence $\overline{\vG}$, is twice
continuously differentiable.

For every $c\in\R$,
$\CS(\vd+c\vone_{J+1})=\CS(\vd)+c$ and
$\vP(\vd+c\vone_{J+1})=\vP(\vd)$. Therefore, for $t>0$ and
$\mathbf q=e^{\vd}$,
$\overline{\vH}(t\mathbf q)
=
\overline{\vH}(e^{\vd+(\ln t)\vone_{J+1}})
=
t\,\overline{\vH}(\mathbf q)$, so $\overline{\vG}$ is homogeneous of
degree one. Finally, differentiating
$\ln\overline{\vG}(\overline{\vH}(e^{\vd}))=\vd$ gives
$\mathbf J_{\ln\overline{\vG}}^{\vs}
(\overline{\vH}(e^{\vd}))
=
[\mathbf J_{\mathbf F}^{\vd}(\vd)]^{-1}$,
which is symmetric positive definite.

It remains to verify the boundary condition. Let
$\mathbf q^n\in\Rpp^{J+1}$ converge to
$\mathbf q\in\R_+^{J+1}\setminus\{\mathbf0\}$, and set $\vd^n=\ln\overline{\vG}(\mathbf q^n)$ and $S_n=\sum_{j\in\cJ}q_j^n$.

Since
$\overline H_j(e^{\vd^n})=e^{\CS(\vd^n)}P_j(\vd^n)$ and
$\overline{\vH}(e^{\vd^n})=\mathbf q^n$, summing over $j$ gives $e^{\CS(\vd^n)}=S_n$.
Therefore, $\CS(\vd^n)=\ln S_n$ and $\vP(\vd^n)=\frac{\mathbf q^n}{S_n}$.

Moreover, $\delta_j^n
\leq
\ln S_n-\mathbb E[\varepsilon_j]$ and $\max_j\delta_j^n
\geq
\ln S_n-\mathbb E\left[\max_j\varepsilon_j\right]$. The first inequality holds because the maximum utility is at least the
utility of alternative $j$, and the second because the maximum of two
sums is at most the sum of their separate maxima.

Because $S_n\to\sum_jq_j>0$, the coordinate sequences are bounded above
and their maximum is bounded below. Thus every subsequence has a further
subsequence and a nonempty set $I\subseteq\cJ$ such that
$\delta_j^n\to\overline\delta_j\in\R$ for $j\in I$ and
$\delta_j^n\to-\infty$ for $j\notin I$.

Along this subsequence, the choice indicators converge almost surely to
those of the ARUM restricted to $I$. Absolute continuity rules out ties,
while full support makes every alternative in $I$ chosen with positive
probability. Hence
\[
I=\{j\in\cJ:q_j>0\}.
\]
Indeed, products outside $I$ cannot be chosen in the limit, whereas strict
positivity of the density gives every alternative in $I$ positive
limiting choice probability. Since
$\vP(\vd^n)\to\mathbf q/\sum_kq_k$, it follows that
$I=\{j\in\cJ:q_j>0\}$.

The finite absolute first moments and dominated convergence also give
\[
\mathbb E\left[
\max_{j\in I}\{\overline\delta_j+\varepsilon_j\}
\right]
=
\ln\left(\sum_{j\in\cJ}q_j\right).
\]

To verify domination, choose $i\in I$. Since $\delta_i^n$ converges and
all coordinates are uniformly bounded above, there exists $C<\infty$
such that, for all sufficiently large $n$,
\[
-C+\varepsilon_i
\leq
\max_{j\in\cJ}\{\delta_j^n+\varepsilon_j\}
\leq
C+\max_{j\in\cJ}\varepsilon_j.
\]
Its absolute value is therefore bounded by
$|{-C+\varepsilon_i}|+
|C+\max_j\varepsilon_j|$, which is integrable because
$\mathbb E|\varepsilon_j|<\infty$ and $\cJ$ is finite. The expression
converges almost surely to
$\max_{j\in I}\{\overline\delta_j+\varepsilon_j\}$, while its expectation
equals $\CS(\vd^n)=\ln S_n$. Dominated convergence therefore gives the
displayed equality.
Applying the same inversion result to the ARUM restricted to $I$, its
choice probabilities determine $(\overline{\delta}_j)_{j \in I}$ up to an additive
constant, and the last equality determines that constant.

The limit is therefore unique and independent of the subsequence.
Consequently,
\[
\overline G_j(\mathbf q)
=
\begin{cases}
e^{\overline\delta_j},&q_j>0,\\
0,&q_j=0,
\end{cases}
\]
defines a continuous extension of $\overline{\vG}$ to
$\R_+^{J+1}\setminus\{\mathbf0\}$.

Homogeneity extends to these boundary points by continuity. The extension is bounded on the compact simplex, so, writing
$\mathbf q=T\vs$ with $T=\sum_jq_j$ and $\vs\in\Delta_J$, homogeneity
implies $\overline{\vG}(\mathbf q)=T\overline{\vG}(\vs)\to\mathbf0$ as
$\mathbf q\to\mathbf0$. Thus setting
$\overline{\vG}(\mathbf0)=\mathbf0$ gives a continuous extension to the closed orthant. By construction, $\overline G_j(\mathbf q)=0$ if and only if $q_j=0$.
Restricting the extension to $\Delta_J$ therefore gives
$\overline G_j(\vs)=0
\Leftrightarrow
s_j=0$, $\vs\in\Delta_J$. Thus $\overline{\vG}$ satisfies the boundary condition in
Definition~\ref{Def:Generator}.
\end{proof}

\newpage

\end{document}